\newcommand{\Best}{\mathbf {\tilde{A}}}
\newtheorem{proposition}{Proposition}
\newtheorem{theorem}{Theorem}
\newtheorem{lemma}{Lemma}
\newcommand{\R}{{\mathbb R}}
\newcommand{\argmax}{{\rm arg}\max}
\newcommand{\besti}{\tilde{\bm a}^i}
\newcommand{\hist}{{\mathcal H}}
\newcommand{\act}{{\bm a}}
\newcommand{\Act}{{\bm A}}
\begin{document}

\begin{frontmatter}



\begin{spacing}{1.0}

\title{\hbox{Designing Efficient Resource Sharing  For  Impatient Players}  Using Limited Monitoring\tnoteref{thanks}}
\tnotetext[thanks]{This research was supported by National Science Foundation (NSF) Grants No. 0830556, (van der Schaar, Xiao) and  0617027 (Zame) and by the Einaudi Institute for Economics and Finance (Zame).  Any opinions, findings, and conclusions or recommendations expressed in this material are those of the authors and do not necessarily reflect
the views of any funding agency.}

\author[1]{Mihaela van der Schaar}
\address[1]{Department of Electrical Engineering, UCLA. Email: mihaela@ee.ucla.edu.}

\author[2]{Yuanzhang Xiao}
\address[2]{Department of Electrical Engineering, UCLA. Email: yxiao@ee.ucla.edu.}

\author[3]{William Zame}
\address[3]{{\bf Corresponding Author} Department of Economics, UCLA, Los Angeles, CA  90095  \\
Email: zame@econ.ucla.edu; Telephone 310-985-3091 \vspace{-.4in}}

\begin{abstract} {The problem of efficient sharing of a resource is nearly ubiquitous.  Except for pure public goods, each agent's use creates a negative externality; often the negative externality is so strong that efficient sharing is impossible in the short run.  We show that, paradoxically, the impossibility of efficient sharing in the short run enhances the possibility of efficient sharing in the long run, even if outcomes depend stochastically on actions, monitoring is limited and users are not patient.  We base our analysis on the familiar framework of repeated games with imperfect public monitoring, but we extend the framework to view the monitoring structure as chosen by a designer who balances the benefits and costs of more accurate observations and reports.  Our conclusions are much stronger than in the usual folk theorems: we do not require a rich signal structure or patient users and provide an explicit online construction of equilibrium strategies.}
 \end{abstract}
\begin{keyword}
repeated games, imperfect public monitoring, perfect public equilibrium, efficient outcomes, resource allocation games \vspace{.2in}
\JEL C72 \sep C73 \sep D02
%
\end{keyword}

 \end{spacing}

\end{frontmatter}

\pagebreak

\setcounter{page}{1}

\section{Introduction}\label{sec:intro}

The problem of efficient sharing of a resource -- a physical resource, a prize, a market -- is nearly ubiquitous.  Unless the resource is a pure public good,  each agent's use of the resource imposes a negative externality on other users.  Hence (self-interested, strategic) agents will find it difficult to share the resource efficiently, at least in the short run.  In some circumstances -- those we focus on in this paper -- the negative externality is so strong -- competition for the resource is so destructive -- that it will be {\em impossible} for users so share the resource efficiently, at least in the short run.  The purpose of this paper is to show that -- perhaps paradoxically -- the impossibility of efficient sharing in the short run enhances the possibility of efficient sharing in the long run -- even when outcomes depend stochastically on actions, monitoring is very limited and players are not very patient.

We formalize our analysis using the familiar framework of repeated games with imperfect public monitoring but with important differences in both the formulation and the conclusions.  With respect to the formulation, the important difference is in the way we view the monitoring structure.  In the usual models of games with imperfect public monitoring, the monitoring structure is viewed as exogenous and fixed.  In the canonical model of \citet*{GreenPorter1984} for instance, the players compete in a Cournot quantity-setting game but receive feedback only about market prices (rather than quantity choices of other firms) which are determined by random market demand.  We are motivated by the many situations in which the feedback received by the players arises from the action choices of a strategic actor -- the {\em designer} --who  must weigh (among other considerations) the trade-offs between more accurate observations of player actions and more accurate reports provided to players about those observations on the one hand and the costs and consequences of observations and reports on the other hand.  Consider for instance a repeated contest (for details see Example 2 in Section~\ref{sec:example}).  In each period, players choose effort levels which determine (stochastically) the contest winner.  The designer (in this case the contest operator) does not observe effort -- and so certainly cannot announce it -- but does observe the identity of the winner, and could announce that.  However announcing the identify of the winner would violate the privacy of the winner and the losers; if privacy is valued, the designer must weigh the trade-off between the value of maintaining privacy and the (possible) efficiency gain of making more information public.  Because we  wish to emphasize the role played by this and similar {\em choices } of the monitoring structure we formalize an elaborated model in which the choice of the monitoring structure -- both what the designer observes and what the designer announces -- is made explicit.  However, the reduced form that results from our elaborated model once the designer has chosen a monitoring structure looks just the same as the reduced form that is familiar from the standard model of repeated games with imperfect public monitoring.

With respect to conclusions, we cite three important differences: we do not assume a rich signal structure (rather, we require only two signals), we do not assume players are arbitrarily patient  (rather, we find an explicit lower bound on the requisite discount factor), and we provide an {\em explicit (distributed) algorithm} that takes as inputs the parameters -- stage game payoffs, discount factor, target payoff -- and computes the strategy -- the action to be chosen by each player following each public history.  This algorithm can be carried out by each player separately and in real time -- there is no need for the designer to specify/describe the strategies to be played.  A consequence of our constructive algorithm is that the strategies we identify enjoy a useful robustness property: generically, the equilibrium strategies are, for many periods, locally constant in the parameters of the environment and of the problem.

Within our structure, we abstract what we see as the essential features of the resource allocation problems by two assumptions about the stage game.  The first is that for each player $i$ there is a unique action profile $\besti$ that $i$ most prefers.  (In the resource allocation scenario, $\besti$ would be the profile in which only player $i$ accesses the resource.)  The second is that for every action profile $\act$ that is {\em not} in the set $\{\besti\}$ of preferred action profiles the corresponding utility profile $U(\act)$ lies {\em below} the hyperplane $H$ spanned by the utility profiles $\{U(\besti)\}$.  (In the resource allocation scenario, this corresponds to the assumption that allowing access to the resource by more than one individual strictly lowers (weighted) social welfare.)  We capture the  notion that monitoring is very limited by assuming that players do not observe the profile $\act$ of actions but rather only some signal $y \in Y$ whose distribution $\rho(y|\act)$ depends on the true profile $\act$, and that (profitable) single-player deviations from $i$'s preferred action profile $\besti$ can be statistically distinguished from conformity with $\besti$ in the same way.   (But we do not assume that different deviations from $\besti$ can be distinguished from  {\em from each other}.  For further comments, see Examples 2 and 3 in Section 3.)  We emphasize the setting in which there are only two signals -- ``good'' and ``bad'' -- because this setting offers the sharpest results and the clearest intuition and, as we shall see, because two signals are often enough.  To help understand the commonplace nature of our problem and assumptions, we offer three examples: the first is a repeated prisoner's dilemma (although with lower cooperative payoffs than usual), the second is a repeated contest, the third is a repeated resource sharing game.

Not surprisingly we build on the framework of (\citet*{APS1990}; hereafter APS).  Our main technical result (Theorem 1) provides conditions (on the information and payoff structures and the discount factor) that are both  {\em necessary and sufficient} for the set of payoffs  that guarantee each player a given level of security to be self-generating.  Because every payoff vector in a self-generating set can be supported in a perfect public equilibrium (PPE), this leads immediately to sufficient conditions for the same sets to consist of payoff vectors that can be achieved in PPE, and an algorithm for the corresponding PPE strategies (Theorem 2).  Our robustness conclusion (Theorem 3) follows from the nature of the algorithm.  For games with two players,  other considerations lead to the conclusion that maximal sets of PPE payoffs must have a special form and so thus to a characterization of the maximal set of PPE payoffs (Theorem 4).    A surprising aspect of this characterization is that there is a discount factor $\delta^* < 1$ such that any efficient payoff that can be achieved as a PPE payoff for {\em some} discount factor $\delta$ can already be achieved as a PPE payoff as soon as the discount factor $\delta$ exceeds {\em some threshold} $\delta^*$.  Patience is rewarded -- but only up to a point.\footnote{\citet*{MOS2002} establish a similar result for the repeated Prisoner's Dilemma with perfect monitoring;  \citet{AtheyBagwell2001} establish a parallel result for symmetric equilibrium payoffs of two-player symmetric repeated Bertrand games.  We are unaware of any general results that have this flavor.}

The literature on repeated games with imperfect public monitoring is quite large -- much too large to survey here; we refer instead to \citet*{MS2006} and the references therein.  However, explicit comparisons with two papers in this literature may be especially helpful.  The first and most obvious comparison is with  (\citet*{FLM1994}; hereafter FLM) on the Folk Theorem for repeated games with imperfect public monitoring.  As do we, FLM consider a situation in which a single stage game  $G$ with action space $\Act$ and utility function $U : \Act \to \R^n$ is played repeatedly over an infinite horizon; monitoring is public but imperfect, so players do not observe actions but only a public signal of those actions.  In this setting,  ${\rm co}[U(\Act)]$ is the closure of the set of payoff profiles that can be achieved as long run average utilities for {\em some} discount factor and {\em some} infinite set of plays of the stage game $G$.  Under certain assumptions, FLM prove that any payoff vector in the interior of ${\rm co}[U(\Act)]$ that is strictly individually rational can be achieved in a PPE of the infinitely repeated game.    However,  the assumptions FLM maintain are very different from ours in two very important dimensions (and some other dimensions that seem less important, at least for the present discussion).  The first is that the signal structure is rich and informative; in particular, that the number of signals is at least one less than the number of actions of any two players.  The second is that players are arbitrarily patient: that is, the discount factor $\delta$ is as close to 1 as we like.  (More precisely:  given a target utility profile $v$, there is some $\delta(v)$ such that if the discount factor $\delta > \delta(v)$ then there is a PPE of the repeated game that yields the target utility profile $v$.)  In particular, FLM do not identify any PPE for any {\em given} discount factor $\delta < 1$. By contrast, we require only two signals  {\em even if action spaces are infinite} and we do {\em not} assume players are patient:  all target payoffs can be achieved  for some {\em fixed} discount factor -- which may be very far from 1.  Moreover, because FLM consider only payoffs in the interior of ${\rm co}[U(\Act)]$, they have nothing to say about achieving {\em efficient} payoffs.  Their results do imply that efficient payoffs  can be arbitrarily well approximated by payoffs that can be achieved in PPE, but only if the corresponding discount factors are arbitrarily close to 1.  By contrast,  (\citet*{FLT2007}; hereafter FLT) {\em do} show how (some) efficient payoffs can be achieved in PPE.    Given Pareto weights $\lambda_1, \ldots, \lambda_n$ set $\Lambda = \sup \{ \sum \lambda_i U_i(\act) : \act \in \Act \}$ and consider the hyperplane $H = \{x \in \R^n : \sum \lambda_i x_i = \Lambda \}$.  The intersection $H \cap co[U(\Act)]$ is a part of the Pareto boundary of  $ co[U(\Act)]$.  As do we, FLT ask what vectors in  $H \cap co[U(\Act)]$ can be achieved in PPE of the infinitely repeated game.  They identify the largest (compact convex) set $Q \subset H \cap {\rm co}[U(\Act)]$ with the property that every target vector $v \in {\rm int}Q$ (the relative interior of $Q$ with respect to $H$) can be achieved in a PPE of the infinitely repeated game for {\em some} discount factor $\delta(v) < 1$.  However, because FLT consider arbitrary stage games and arbitrary monitoring structures, the set $Q$ identified by FLT may be empty, and FLT do not provide any conditions that guarantee that $Q$ is not empty.  Moreover, as in FLM, FLT assume that players are arbitrarily patient, so do not identify any PPE for any {\em given} discount factor $\delta < 1$.  Having said this, we should also point out that FLT identify the closure of the set of  {\em all} payoff vectors in the interior of $H \cap {\rm co}[U(\Act)]$ that can be achieved in a PPE for some discount factor, while we identify only some.  So there is a trade-off: FLT find more PPE payoffs but provide much less information about the ones they find; we find fewer PPE payoffs but provide much more information about the ones we find.

At the risk of repetition, we want to emphasize the most important features of our results.  The first is that we do not assume discount factors are arbitrarily close to 1.  The importance of this seems obvious in all environments -- especially since the discount factor encodes both the innate patience of players {\em and} the probability that the interaction continues.  The second is that we impose different -- and in many ways weaker -- requirements on the monitoring structure; indeed, we require only two signals, even if action spaces are infinite.  Again, the importance of this seems obvious in all environments, but especially in those in which signals are not generated by some exogenous process  but must be provided by a designer.  In the latter case it seems obvious  -- and in practice may be of supreme importance -- that the designer may wish or need  to choose a simple information structure that employs a small number of signals, saving  on the cost of observing the outcome of play and on the cost of communicating to the agents (and preserving privacy as well).  More generally, the designer may face a trade-off between the efficiency obtainable with a finer information structure and the cost of using that information structure.  (We will return to this point later.)  Finally, because we provide a distributed algorithm for calculating equilibrium play, neither the agents nor a designer need to work out the equilibrium strategies in advance; all calculations can be done online, in real time.

Following this Introduction, Section 2 presents the formal model; Section 3 presents three examples that illustrate the model.  Section 4 presents some preliminary results, presenting conditions under which {\em no} efficient payoffs can be achieved in PPE for {\em any} discount factor.  Section 5 presents the main technical result (Theorem 1); Section 6 presents the implications for PPE (Theorems 2,3) and a comparison with FLT; Section 7 specializes to the case of two players (Theorem 4).  Section 8 returns to the examples to illustrate both the conclusions and the general framework.     Section 9 concludes.  We relegate all proofs to the Appendix.

\section{Model}\label{sec:model}

The reduced form of our model will closely resemble the familiar framework of a repeated game with imperfect public monitoring and we state and prove our formal results in the context of that reduced form.  However,  because we want to emphasize the role played by the designer, we begin by presenting a more elaborated form.

\subsection{Stage Game: Elaborated Form}

There are $n+1$ (potential) actors in our framework: $n$ {\em players} and  a {\em designer}.  Players are characterized by an (exogenously given) {\em game form}:
\begin{itemize}
\item a  (measurable) space $Z$ of {\em outcomes}
\item for each player $i$
\begin{itemize}
\item a  (measurable) space $A_i$ of {\em actions}
\item  a  (measurable) {\em utility function} $u_i : A_i \times Z \to \R$
\end{itemize}
\item a  (measurable) mapping $\bm{a} \mapsto \pi( \cdot | \bm{a}) : {\bm A} = A_1 \times \cdots \times A_n \to \Delta(Z)$
\end{itemize}
We view $\pi(z|\bm{a})$ as the probability that the outcome $z \in Z$ occurs when players choose the action profile $\bm{a} \in \bm{A}$.  Thus the joint actions of players $\bm{a} \in \bm{A}$ stochastically determine an outcome $z \in Z$, and each player's realized utility depends on its own action and the realized outcome.\footnote{We could incorporate actions into the space $Z$ of outcomes so that realized utility depended only on outcomes, but it seems useful to keep separate track of own actions.} For the moment we require only that the spaces $A_i$, the utility functions $u_i$ and the probability mapping $\pi$ be measurable, so that utilities in the reduced form be defined; but later we will insist that the spaces be compact metric and that the  utility functions and the probability mapping be continuous.

The designer is characterized by a  {\em monitoring technology}:
\begin{itemize}
\item a set of $\Phi$ of pairs $(X, \varphi)$ where:
\begin{itemize}
\item  $X$ is a (measurable) space
\item $z \mapsto \varphi(\cdot|z): Z \to \Delta(X)$ is a (measurable) mapping
\end{itemize}
A pair $(X, \varphi)$ is a {\em measurement device}.
\item a set $\Psi$ of pairs $(Y,\psi)$ where
\begin{itemize}
\item $Y$ is a (measurable) space
\item $x \mapsto \psi(\cdot|x): X \to \Delta(Y)$ is a (measurable) mapping
\end{itemize}
A pair $(Y, \psi)$ is an {\em announcement rule}.
\end{itemize}
For the moment, we again require only that the spaces $X, Y$ and the mappings $\varphi, \psi$ be measurable, but later we will insist that the spaces be compact metric and that the mappings be continuous.  Given a choice  $(X,\varphi) \in \Phi$ we interpret $\varphi(x|z)$ as the probability that the designer measures  (observes) $x$ when the outcome $z$ has actually occurred.  Given a choice  $(Y,\psi) \in \Psi$, we interpret $\psi(y|x)$ as the probability that the designer makes the {\em (public) announcement} of the signal $y \in Y$ when the observation $x$ has actually been made.  A pair of choices $(X,\varphi) \in \Phi, (Y,\psi) \in \Psi$ constitute the {\em monitoring structure}.

\subsection{Stage Game: Reduced Form}

The {\em reduced form} of the stage game consists of
\begin{itemize}
\item a set $N = \{1, \ldots, n\}$ of players
\item for each player $i$
\begin{itemize}
\item a (measurable)  space $A_i$ of actions
\item a (measurable) utility function $U_i : \Act= A_1 \times \cdots \times A_n \to \R$
\end{itemize}
\item a (measurable) compact metric space of public signals $Y$
\item a (measurable) map $\act \mapsto \rho( \cdot | \act) : \Act \to \Delta(Y)$
\end{itemize}
We interpret $U_i(\act)$ as $i$'s {\em ex ante} (expected) utility when $\act$ is played and $\rho(y|\act)$ as the probability that the signal $y$ is observed when $\act$ is played.

\subsection{Stage Game: From the Elaborated Form to the Reduced Form}

To pass from the elaborated form to the reduced form we simply define the {\em ex ante} (expected) utilities
$U_i(\act)$ and and the probability distribution $\rho(\cdot| \act)$ over public signals as functions of the action profile $\act$ that is played.  For $\act \in \Act$ and $D \subset Y$ these are:
\begin{eqnarray*}
U_i(\act) &=&  \int_Z u_i(a_i,z) \, d\pi(z|\act) \\
\mbox{} \\
\rho( D | \act) &=& \int_Y \int_X \int_Z {\bm 1}_D \, d\psi(y|x) \, d\varphi(x|z) \, d\pi(z|\act)
\end{eqnarray*}
If $Z, X, Y$ are all finite the last equation can be re-written more simply as
$$
\rho(y|\act) = \sum_{x\in X} \sum_{z\in Z} \psi(y|x) \, \varphi(x|z) \, \pi(z|\act)
$$
Under the maintained assumptions on realized utility, outcome mapping, measurement technology and announcement rules, the derived {\em ex ante} utilities and signal distribution are measurable; if the former are continuous, so are the latter.

\subsection{The Repeated Game with Imperfect Public Monitoring}

In the repeated game, the reduced stage game $G$ is played in every period $t=0,1,2,\ldots$.  Given the signal structure, a {\em public history} of length $t$ is a sequence $(y^0, y^1, \ldots, y^{t-1}) \in Y^t$.  We write $\hist(t)$ for the set of public histories of length $t$,
$\hist^T =  \bigcup_{t=0}^T \hist(t)$ for the set of public histories of length at most $T$ and   $\hist = \bigcup_{t=0}^\infty \hist(t)$   for the set of all public histories of all finite lengths.    A {\em private history} for player $i$  includes  the public history, the actions taken by player $i$, and the realized utilities observed by player $i$,  so a {\em private history} of length $t$ is a  a sequence $(a^0_i, \ldots, a^{t-1}_i; u_i^0, \ldots, u_i^{t-1};y^0, \ldots, y^{t-1}) \in A^t_i \times  \R^t \times Y^t$.  We write $\hist_i(t)$ for the set of $i$'s private histories of length $t$, $\hist_i^T = \bigcup_{t=0}^T \hist_i(t) $ for the set of $i$'s private histories of length at most $T$  and $\hist_i = \bigcup_{t=0}^\infty \hist_i(t)$ for the set of $i$'s private histories of all finite lengths.

A {\em pure strategy} for player $i$ is a mapping from all private  histories into the set of pure
actions $\sigma_i : \hist_i \to A_i$.  A {\em public strategy} for player $i$ is a pure strategy that is independent of $i$'s own action/utility history; equivalently, a mapping from public histories to $i$'s pure actions $\sigma_i : \hist \to A_i$.

We assume all players discount future utilities using the same discount factor $\delta \in (0,1)$ and we use long-run averages, so if the stream of expected utilities is $\{u^t\}$ the vector of long-run average utilities is $(1-\delta) \sum_{t=0}^\infty \delta^{t} u^t$.   A strategy profile $\sigma : \hist_1 \times \ldots \times \hist_n \to \Act$ induces a probability distribution over public and private histories and hence over {\em ex ante} utilities.  We abuse notation and write $U(\sigma)$ for the vector of expected (with respect to this distribution) long-run average {\em ex ante} utilities when players follow the strategy profile $\sigma$.

As usual a strategy profile $\sigma$ is an {\em equilibrium} if each player's strategy is optimal given the strategies of others.  A strategy profile is a {\em public equilibrium} if it is an equilibrium and each player uses a public strategy; it is a {\em perfect public equilibrium (PPE)} if it is a public equilibrium following every public history.

\subsection{Interpretation}

In our formulation, which restricts players to use public strategies, we tacitly assume that players make no use of any information other than that provided by the public signal; in particular, players make no use of information that might be provided by the realized utility they experience each period.  As discussed in \citet*{MS2006}, this assumption admits a number of possible interpretations, each of which is appropriate in some circumstances.  The first is that utility is not realized until the game terminates.  The second is that the outcome $z$ and the public signal $y$ coincide, so that realized utility depends only on own action and the public signal (both of which are observed).  The third is that -- at least in the equilibria and deviations under consideration -- the information provided by realized utility is already provided by the public signal.  (See Example 2 below.)  A fourth is that even if utility {\em is} realized during play {\em and} realized utility {\em does} provide information not provided by the public signal, this additional information is not used.  Lest this last interpretation seems odd, recall that if players other than $i$ follow public strategies then it is optimal for player $i$ to follow a public strategy as well; in particular if other players make no use of information provided by their own realized utility then it is optimal for player $i$ to make no use of information provided by $i$'s realized utility.  (Again, see Example 2 below.)  Finally, it should be kept in mind that by restricting our attention to PPE we are tying our own hands; since our objective is to support efficient sharing, restricting to a particular class of strategies only makes our results stronger.

\subsection{Assumptions on the Stage Game}

To this point we have described a very general setting; we now impose additional assumptions -- first on the stage game and then on the information structure -- that we exploit in our results.

We assume that the spaces $Z, A_i, X, Y$ are all compact metric and that the functions/mappings $u_i, \pi, \varphi, \psi$ are all continuous; as
noted this implies that the functions/mappings $U_i, \rho$ are continuous as well.

Set $U({\bm A}) = \{U({\bm a}) \in \R^n : {\bm a} \in {\bm A}  \}$ and let ${\rm co}(U({\bm A}))$ be the convex hull of $U({\bm A})$.
For each $i$ set
\begin{eqnarray*}
\tilde{v}^i &=& \max_{\act \in \Act} U_i(\act) \\
\besti &=& \argmax_{\act \in \Act} U_i(\act)
\end{eqnarray*}
Compactness of the action space $\Act$ and continuity of utility functions $U_i$ guarantee that $U(\Act)$ and
${\rm co}[U(\Act)]$ are compact, that $\tilde{v}^i$ is well-defined and that the $\argmax$ is not empty.  For convenience, we assume that the $\argmax$ is a singleton; i.e., the maximum utility $\tilde{v}^i$ for player $i$ is attained  at a {\em unique} strategy profile
$\besti$.\footnote{This assumption could be avoided, at the expense of some technical complication.}  We refer to
$\besti$ as $i$'s {\em preferred action profile} and to $\tilde{v}^i = u(\besti)$ as $i$'s {\em preferred utility profile}.  In the context of resource sharing, $\besti$ will typically be the (unique) action profile at which agent $i$ has optimal access to the resource and other agents have none.  For this reason, we will often say that $i$ is {\em active} at the profile $\besti$ and other players are {\em inactive}.  Set $\Best =  \{\besti\}$ and $\tilde{V} = \{\tilde{v}^i\}$ and write $V = {\rm co}\,(\tilde{V})$ for the convex hull of $\tilde{V}$.  Note that ${\rm co}(U(\Act))$ is the closure of the set of vectors that can be achieved -- for {\em some} discount factor -- as long-run average {\em ex ante} utilities of repeated plays of the game $G$ (not necessarily equilibrium plays of course) and that $V$ is the closure of the set of vectors  that can be achieved -- for {\em some} discount factor -- as long-run average {\em ex ante} utilities of repeated plays of the game $G$ in which only actions in $\Best$ are used.  We  refer to ${\rm co}[U(\Act)]$ as the set of {\em feasible  payoffs} and to $V$ as the set of {\em efficient payoffs}.\footnote{The latter is a slight abuse of terminology: because $V$ is the intersection of the set of feasible payoffs with a bounding hyperplane, every payoff vector in $V$ is Pareto efficient and yields maximal weighted social welfare and other feasible payoffs yield lower weighted social welfare -- but other feasible payoffs might also be Pareto efficient.}

We abstract the motivating class of resource allocation problems  by imposing  conditions on the set of preferred utility profiles.  The first is made largely for convenience (and is generically satisfied whenever action spaces are finite); the second abstracts the idea that there are strong negative externalities.

\bigskip

\noindent {\bf Assumption 1 } The vectors $\tilde{v}^1, \ldots, \tilde{v}^n$ are linearly independent.

\bigskip

\noindent {\bf Assumption 2 } The affine span of $\tilde{V}$ is a hyperplane $H$ and all {\em ex ante} utility vectors of the game other than the those in $\tilde{V}$ lie below $H$.  That is, there are weights $\lambda_1, \ldots, \lambda_n > 0$ such that $\sum \lambda_j u_j(\besti) = 1$ for each $i$ and $\sum \lambda_j u_j(\act) < 1$ for each $\act \in \Act, \act \notin \Best$.\footnote{That the sum is 1 is just a normalization.}

\subsection{Assumptions on the Monitoring Structure}

As noted in the Introduction, we focus on the case in which there are only two signals.

\bigskip

\noindent {\bf Assumption 3 }  The set $Y$  contains precisely two signals and $\rho(y|\act) > 0$ for every $y \in Y$ and $\act \in \Act$.  (The
monitoring structure has {\em full support}.)

\bigskip

We assume that profitable deviations from the profiles $\besti$ exist and  be statistically detected in a particularly simple way.

\bigskip

\noindent {\bf Assumption 4 }  For each $i \in N$ and each $j \not= i$ there is an action $a_j \in A_j$ such that $u_j(a_j, \besti_{-j}) >
u_j(\besti)$.  Moreover, there is a labeling $Y = \{y^i_g, y^i_b\}$ with the property that
$$
a_j \in A_j, U_j(a_j, \besti_{-j}) > U_j(\besti) \Rightarrow \rho(y^i_g| a_j, \besti_{-j}) < \rho(y^i_g |, \besti)
$$
That is, given that other players are following $\besti$, any strictly profitable deviation by player $j$ strictly reduces the probability that the
``good'' signal $y^i_g$ is observed (equivalently: strictly increases the probability that the ``bad'' signal $y^i_b$ is observed).

\bigskip

 The import of Assumption 4 is that all profitable single player deviations from $\besti$ alter the signal distribution in the {\em  same direction} although  perhaps not to the same extent.  We allow for the possibility that non-profitable deviations may not be detectable in the same way -- perhaps not detectable at all -- and for the possibility that which signal is ``good'' and which is ``bad''  depend on the identity of the {\em active} player $i$.

\section{Examples}\label{sec:example}

The assumptions we have made -- about the structure of the game and about the information structure -- are far from innocuous, but they apply in a wide variety of interesting environments.  Here we describe three simple examples which motivate and illustrate the assumptions we have made and the conclusions to follow.  We present the first example directly in the reduced form and the other two examples in both the elaborated and reduced forms.

\bigskip

\noindent {\bf Example 1: A Repeated Prisoners' Dilemma}

We begin by discussing a simple Prisoner's Dilemma but with a payoff structure slightly different from the familiar one; see Table~\ref{table:PrisonersDilemmaGame}.   For our purposes we assume $B > 2c > 2b > 0$.  As usual, $(D,D)$ is a strictly dominant strategy profile; the difference between the payoffs shown here and the usual ones is that $(C,C)$ is Pareto dominated by randomizing between $(C,D)$ and $(D,C)$.  See Figure \ref{fig:PayoffRegion_PrisonersDilemma}.

There are two signals: $Y = \{y_g, y_b\}$; the probability distribution over signals following actions is
\begin{eqnarray}
\pi(y_g |\bm{a}) = \left\{\begin{array}{cl} p & \mbox{if} \ \ \  \bm{a}=\mathrm{(C,C)} \\ q & \mbox{if} \ \ \  \bm{a}=\mathrm{(C,D)~or~(D,C)} \\ r & \mbox{if} \ \ \   \bm{a}=\mathrm{(D,D)}
\end{array}\right.
\end{eqnarray}
where $p, q, r \in (0,1)$; for our purposes we assume $p \geq q > r$.  It is easily checked that the stage game and monitoring structure satisfy our assumptions.  (Note that $y_g$ is the good signal for both players.)  As we will show in Section 4, we can completely characterize the most efficient outcomes that can be achieved in a PPE.  To summarize the conclusion, for each discount factor $\delta \in (0,1)$ write $E(\delta)$ for the set of efficient (average) payoffs that can be achieved when the discount factor is $\delta$.  Set
$$
\delta^* \ =  \ \frac{1}{1+\left(\frac{B-2\frac{q}{q-r}b}{B+2\frac{1-q}{q-r}b}\right)}
$$
It follows from Theorem 4 that if $\delta \geq \delta^*$ then
$$
E(\delta) = \{(v_1, v_2) : v_1 + v_2 = B; v_i \geq q/(q-r) b \}
$$
Note that the set of efficient equilibrium outcomes {\em does not} increase as $\delta \to 1$; as we noted in the Introduction, patience is rewarded but only up to a point.  See Figure \ref{fig:PayoffRegion_PrisonersDilemma_PPE}.

\begin{table}
\renewcommand{\arraystretch}{1.1}
\caption{Modified Prisoners' Dilemma} \label{table:PrisonersDilemmaGame} \centering
\begin{tabular}{r|c|c|}
\multicolumn{1}{r}{}
 &  \multicolumn{1}{c}{C}
 & \multicolumn{1}{c}{D} \\
\cline{2-3}
C & $(c, c)$ & $(0, B)$ \\
\cline{2-3}
D & $(B, 0)$ & $(b, b)$ \\
\cline{2-3}
\end{tabular}
\end{table}

\begin{figure}
\centering
\includegraphics[width =3.0in]{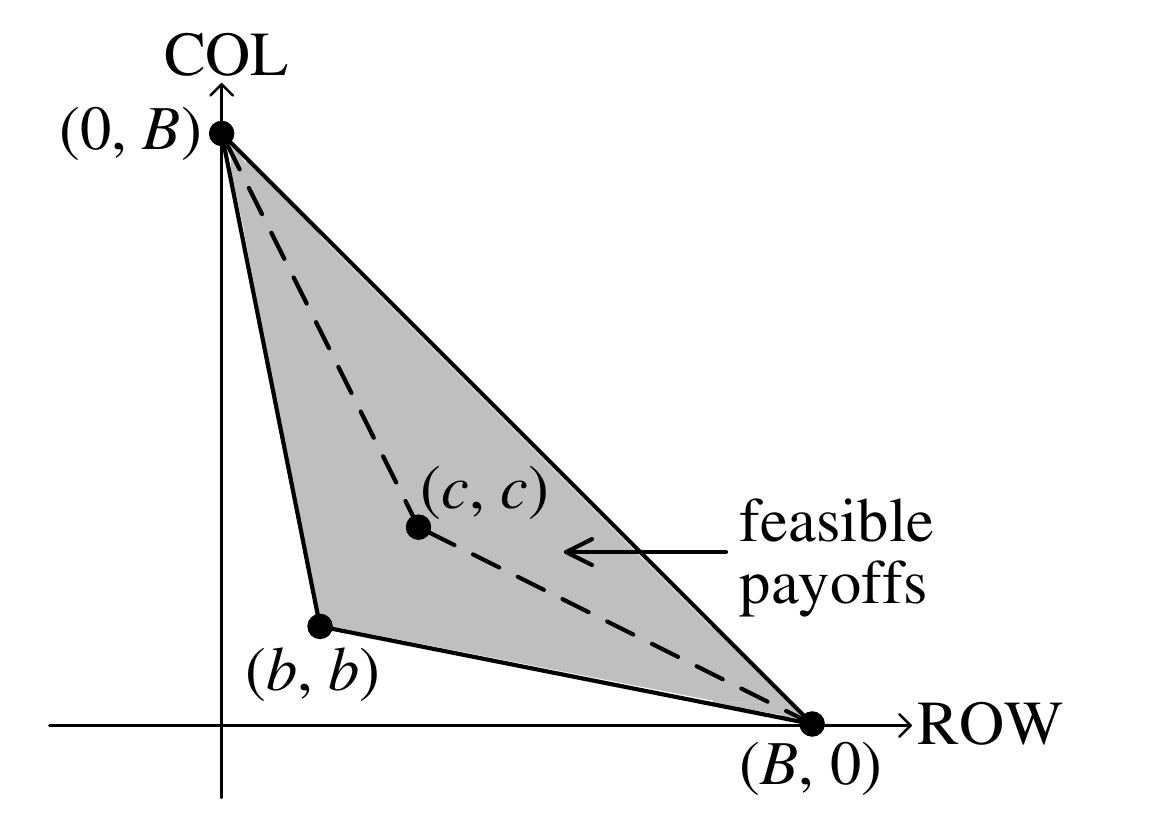}
\caption{Feasible Region for the Modified Prisoners' Dilemma\label{fig:PayoffRegion_PrisonersDilemma}}
\end{figure}

\bigskip

\noindent {\bf Example 2: A Repeated Contest }

We consider a repeated contest.  In each period, a set of $n \geq 2$ players competes for the use of a single indivisible resource/prize each of them values at $R > 0$.   Winning the contest depends (stochastically) on the effort exerted by each player; we write $A_i = [0,1]$ for the set of $i$'s effort levels (actions).  Each agent's effort interferes with the effort of others and there is always  some probability that no one wins (the prize is not awarded) independently of the choice of effort levels.  If $\act = (a_i)$ is the vector of effort levels then the probability agent $i$ obtains the wins the contest (obtains the resource/prize) is
$$
{\rm Prob}(i \ {\rm wins} | \act) = a_i \left( \eta - \kappa \sum_{j\not=i} a_j \right)^+
$$
where $\eta, \kappa \in (0,1)$ are parameters.   The assumption that $\eta < 1$ reflects that there is always some probability the prize is not awarded; $\kappa$ measures the strength of the interference.  Notice that competition is destructive: if more than one agent exerts effort that lowers the probability that {\em anyone} wins the prize.  Utility is separable in reward and effort; effort is costly with constant marginal cost $c > 0$.  To avoid trivialities and conform with Assumptions 1-4 we assume $R\eta > c$ and that
$\kappa >\frac{1}{2}\left(\eta-\frac{c}{R}\right)$.

In the elaborated form  of the stage game, players are $N = \{1, \ldots, n\}$, action sets are $A_i = [0,1]$, outcomes are $Z = \{z_0, \ldots, z_n \}$ (where $z_0$ is interpreted as ``no one wins'' and $z_i$ is interpreted as ``$i$ wins'') and $i$'s realized utility as a function of his own effort level $a_i$ and the outcome $z$ is
$$
u_i(a_i, z_k) = \left\{ \begin{array}{rcl}
                              R - ca_i & \mbox{ if } k = i \\
                              - c a_i & \mbox{ if } k \not= i
                              \end{array}
                              \right.
$$
In this context it seems natural to assume that the designer observes who wins -- how else could the prize be awarded? -- so that $X = Z$ and $\varphi$ is the identity.  We assume that the designer wishes to preserve privacy so announces only whether or not {\em some} player won the contest but not {\em the identity of the winner}.  Hence the reporting rule $(Y^1, \psi^1)$, $(Y^2, \psi^2),$ where
\begin{itemize}
\item $Y^1 = \{y_b, y_g\}$; $\psi^1(z_k) = y_b$ if $k = 0$,  $\psi^1(z_k) = y_g$ if $k \not= 0$
\item $Y^2  = Z$; $\psi^2(z_k) = z_k$ for all $k = 0, \ldots, n$
\end{itemize}
In the first case, the designer announces whether or not there has been a winner; in the second case the designer also announces the identity of the winner.

In the reduced forms of the stage game, the {\em ex ante} expected utilities  are given by
\begin{eqnarray*}
U_i(\act) &=& a_i \left( \eta - \kappa \sum_{j \not= i} a_j \right)^+ R - ca_i \\
\end{eqnarray*}
In the first case,  the signal distribution is
$$
\overline{\rho}(y_* | a) \left\{ \begin{array}{rcl}
                            1 - \sum_i a_i \left(\eta - \kappa \sum_{j \not= i} a_j \right)^+
                                  &\mbox{ if } & * = b \\

                            \sum_i a_i \left(\eta - \kappa \sum_{j \not= i} a_j \right)^+ &\mbox{ if } & *= g
                            \end{array} \right.
$$
In the second case the signal distribution is
$$
\widehat{\rho}(y_k | a) = \left\{ \begin{array}{rcl}
                            1 - \sum_i a_i \left(\eta - \kappa \sum_{j \not=i} a_j \right)^+
                                  &\mbox{ if } & k = 0 \\
                            a_k \left(\eta - \kappa \sum_{j \not=k} a_j \right)^+ &\mbox{ if } & k \not= 0
                            \end{array} \right.
$$
Straightforward but somewhat messy calculations show that in either case the reduced form satisfies all of our assumptions.  (Player $i$'s preferred action profile $\besti$ has $\besti_i = 1$ and $\besti_j = 0$ for $j \not= i$: $i$ exerts maximum effort, others exert none.  Note that this does not guarantee that $i$ wins the contest -- there may still be no winner -- but the effort profiles $\besti$ are precisely those that maximize the probability that {\em someone} wins the prize.)

The first reporting rule preserves privacy, the second rule does not.  However, the second reporting rule provides more information to players.  Suppose for instance that a strategy profile $\sigma$ calls for $\besti$ to be played after a particular history.   If all players follow $\sigma$ then only player $i$ exerts non-zero effort so  only two outcomes can occur: either player $i$ wins or no one wins.  If player $j \not= i$ deviates by exerting non-zero effort, a third outcome can occur:  $j$ wins.  With either monitoring structure, it is possible for the players to detect (statistically) that {\em someone} has deviated -- the probability that {\em someone} wins goes down -- but with the second monitoring structure it is also possible for the players to detect (statistically) {\em who} has deviated -- because the probability that the deviator wins becomes positive.  Hence, with the first monitoring structure all deviations must be ``punished'' in the same way, but with the second monitoring structure, ``punishments'' can be tailored to the deviator.  If punishments can be ``tailored'' to the deviator then punishments can be more severe; if punishments can be more severe it may be possible to sustain a wider range of PPE.  Which reporting rule -- hence which monitoring structure -- should be chosen by the designer will depend on the tradeoff the designer makes between preserving privacy and sustaining a wider range of PPE.  We will see a similar but even starker tradeoff in Example 3 following.

\bigskip

\noindent {\bf Example 3: Resource Sharing}

We consider $n \geq 3$ users (players) who send information packets through a common server.  The server has a nominal capacity of $\chi > 0$ (packets per unit time) but the capacity is subject to random shocks so the actually realized capacity in a given period is $\chi - \varepsilon$, where the random shock $\varepsilon$ is distributed in some interval $[0,\bar{\varepsilon}]$ with (known) distribution $\nu$.
  In each period, each player chooses a packet rate (packets per unit time) $a_i \in A_i = [0,\chi]$.  This is a well-studied problem; assuming that the players' packets arrive according to a Poisson process, the whole system can be viewed as what is known as an  {\em M/M/1 queue}; see \citet*{FlowControl} for instance.  It follows from the standard analysis that if
$\varepsilon$ is the realization of the shock then packet deliveries will be be subject to a {\em delay} of
$$
d(\bm{a},\varepsilon) = \left\{\begin{array}{lll} 1/(\chi-\varepsilon-\sum_{i=1}^n a_i)  & \mathrm{ if } & \sum_{i=1}^n a_i<\chi-\varepsilon \\
\infty & \mathrm{if} &  \sum_{i=1}^n a_i  \geq \chi-\varepsilon
\end{array}\right.
$$
Given the delay $d $, each player's realized utility is its ``power'', namely the ratio of the $p$-th power of its own packet rate to the delay:
$$
u_i(\bm{a},d) = a_i^p/d
$$
where $p>0$ is a parameter that represents trade-off between rate and delay.\footnote{In order to guarantee that the reduced form satisfies our assumptions we assume $\bar{\varepsilon}\leq\frac{2}{2+p}\chi$.}  (If delay is infinite utility is 0.)  Formally, we identify the
outcome with the pair consisting of the vector $\bm{a}$ of packet rates and the realized shock $\varepsilon$, so $Z = A \times [0,\bar{\varepsilon}]$
and $\pi( \cdot | \bm{a}) = \delta_{\bm{a}} \times \nu$ where $\delta_{\bm a}$ is point mass at $\bm{a}$ and $\nu $ is the given distribution
of shocks.

The designer does not observe packet rates but can measure the delay, but with error and at a cost.  Thus the space of measurements is $X =  [0,\infty]$ and the measurement technology consists of a space of maps $(\act,\varepsilon) \mapsto
\varphi(\cdot | (\act,\varepsilon)): \Act \times [0,\bar{\varepsilon}] \to \Delta(X)$.  Many possible reporting technologies are possible; we assume the designer reports only whether the measured delay  was  above or below a chosen threshold $d_0$; say $Y = \{y_\ell, y_h\}$ where $y_\ell$ is interpreted as ``delay was low (below $d_0$)'' and  $y_h$ is interpreted as ``delay was high (above $d_0$).''

In the reduced form, each  player $i$'s ex-ante payoff is
\begin{eqnarray}
U_i(\bm{a}) &=& \left\{\begin{array}{cl} a_i^p \, (\chi-\frac{\bar{\varepsilon}}{2}-\sum_{j=1}^n a_j) & \mathrm{ if } \  \ \sum_{j=1}^n a_j\leq\chi-\bar{\varepsilon} \\
a_i^p \, (\chi-\sum_{j=1}^n a_j) \, \frac{\chi-\sum_{j=1}^n a_j}{2\bar{\varepsilon}} & \mathrm{ if }\ \  \chi-\bar{\varepsilon}<\sum_{j=1}^n a_j<\chi \\
0 & \mathrm{ otherwise }
\end{array}\right. \nonumber
\end{eqnarray}
and the distribution of signals is
$$
\rho(y_\ell| \act) =\int_{0}^{\chi-\sum_{j=1}^n a_j-\frac{1}{d_0}} d\,\nu(x) = \frac{[\chi-\sum_{j=1}^n
a_j-\frac{1}{d_0}]_{0}^{\bar{\varepsilon}}}{\bar{\varepsilon}},
$$
where $[x]_a^b\triangleq\min\{\max\{x,a\},b\}$ is the projection of $x$ in the interval $[a,b]$.
Note that $y_\ell$ is the ``good'' signal: deviation from any preferred action profile increases the probability of realized delay, hence increases the probability of measured delay, and reduces the probability that reported delay will be below the chosen threshold.

It might seem to the reader that the players could back out realized delay from their own realized utility and hence that announcements are irrelevant -- but this is not quite so.  Players who choose packet rates greater than 0 {\em can} back out realized delay from their own realized utility but at any one of the preferred action profiles $\besti$ and at any single-player deviation from any one of the preferred action profiles $\besti$, at least one player will choose a packet rate $a_j = 0$ and hence will experience realized utility $U_i(\act) = 0$; that player {\em cannot}  back out observed delay.  Hence announcements serve to (statistically) inform players {\em who have complied} of the existence of some player {\em who has not complied}.  Put differently, announcements serve to keep all players on the same informational page.

\section{Ruling out Some Efficient PPE Payoffs}

Throughout this Section, we consider a fixed reduced form and maintain the notation and assumptions of Section 2.  Our ultimate goal is to find conditions -- on the discount factor among other things -- that enable us to construct PPE that achieve payoffs in $V$ (efficient payoffs).

We first show that under certain conditions, certain efficient payoffs {\em cannot} be achieved in PPE no matter what the discount factor is.  To this end, we identify   two measures of benefits from deviation.  (These same measures will play a prominent role in the next Section as well.)  Given $i, j \in N$ with $i \not= j$ set:
\begin{eqnarray}
\alpha(i,j) &=&
 \sup \Big\{  \frac{u_j(a_j, \besti_{-j}) - u_j( \besti)}{\rho(y_b^i|a_j, \besti_{-j})-\rho(y_b^i| \besti)} : \nonumber \\
&& \hspace{0.3in}  a_j\in A_j, u_j(a_j,\besti_{-j})>u_j( \besti) \Big\} \\
\beta(i,j) &=&
 \inf \Big\{  \frac{u_j(a_j, \besti_{-j}) - u_j( \besti)}{\rho(y_b^i|a_j, \besti_{-j})-\rho(y_b^i| \besti)} : \nonumber \\
&& \hspace{0.3in}  a_j\in A_j, u_j(a_j,\besti_{-j}) < u_j( \besti), \rho(y_b^i|a_j, \besti_{-j})<\rho(y_b^i| \besti) \Big\}
\end{eqnarray}
(We follow the usual convention that the supremum of the empty set is $-\infty$ and the infimum of the empty set is $+\infty$.)

Note that $u_j(a_j, \besti_{-j}) - u_j( \besti)$ is the gain or loss to player $j$ from deviating from $i$'s preferred action profile $\besti$ and $\rho(y_b^i|a_j, \besti_{-j})-\rho(y_b^i| \besti)$ is the increase or decrease in the probability that the bad signal occurs (equivalently, the decrease or increase in the probability that the good signal occurs) following the same deviation.  In the definition of $\alpha(i,j)$  we consider only deviations that are strictly profitable; by assumption, such deviations strictly increase the probability that the bad signal occurs, so $\alpha(i,j)$ is either $-\infty$ or strictly positive.  In the definition of $\beta(i,j)$ we consider only deviations that are strictly unprofitable {\em and} strictly decrease the probability that the bad signal occurs, so $\beta(i,j)$ is the infimum of strictly positive numbers and so is necessarily $+\infty$ or finite and non-negative.\footnote{Note that if we strengthened Assumption 4 so that {\em any} deviation -- profitable or not -- increased the probability of a bad signal (as is the case in Examples 1-3 and would be the case in most resource allocation scenarios), then $\beta(i,j)$ would be the infimum of the empty set whence $\beta(i,j) = + \infty$.}

To understand the significance of these numbers, think about how player  $j$ could gain by deviating from $\besti$.  Most obviously, $j$ could gain by deviating to an action that {\em increases} its current payoff.  By assumption, such a deviation will {\em increase} the probability of a bad signal; assuming that a bad signal leads to a lower continuation utility, whether such a deviation will be profitable will depend on the current gain and on the change in probability;  $\alpha(i,j)$ represents a measure of net profitability from such deviations.  However, player $j$ could also gain by deviating to an action that {\em decreases} its current payoff but also {\em decreases} the probability of a bad signal, and hence leads to a higher continuation utility.   $\beta(i,j)$ represents a measure  of net profitability from such deviations.

Because $\tilde{V}$ lies in the supporting hyperplane $H$ and the utilities for action profiles not in $\Best$ lie strictly below $H$, in order that the strategy profile $\sigma$ achieves an efficient payoff it is necessary and sufficient that $\sigma$ use only preferred action profiles:
$U(\sigma) \in V$ if and only if $\sigma(h) \in \Best$ for every public history $h$ (independently of the discount factor $\delta$).  For PPE strategies we can say a lot more.  The first Proposition is almost obvious; the second and third seem far from obvious.  (All proofs are in the Appendix.)

\bigskip

\begin{proposition}  In order that $\tilde{v}^i$ be achievable in a PPE equilibrium (for any discount factor $\delta$) it is necessary and sufficient that $u_j(a_j , \besti_{-j}) \leq u_j(\besti)$ for every $j \not=i$ and every $a_j \in A_j$.
\end{proposition}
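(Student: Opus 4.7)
The plan is to reduce the question to whether the unconditional stationary strategy $\sigma^{*} \equiv \besti$ (play $\besti$ after every public history, regardless of the signals observed) is itself a PPE, and then to check that $\sigma^{*}$ is a PPE exactly under the stated one-shot deviation condition.

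For the reduction, suppose $\sigma$ is any PPE with $U(\sigma) = \tilde v^{i}$. Since $\tilde v^{i}$ lies on the hyperplane $H$ while $U(\act) < H$ for all $\act \notin \Best$ (Assumption 2), every history $h$ that occurs with positive probability under $\sigma$ must satisfy $\sigma(h) \in \Best$. Because $\rho(\cdot \mid \act)$ has full support for every $\act$ (Assumption 3), every finite public history that is consistent with $\sigma(h) \in \Best$ at each prefix does have positive probability, so $\sigma(h) \in \Best$ for \emph{every} $h$. The induced long-run average payoff is then
\[
U(\sigma) \;=\; \sum_{k=1}^{n} \lambda_{k}\,\tilde v^{k}, \qquad \lambda_{k} \;=\; \mathbb{E}_{\sigma}\!\left[(1-\delta)\sum_{t=0}^{\infty}\delta^{t}\mathbf{1}\{\sigma(h^{t})=\besti[k]\}\right],
\]
a convex combination with $\lambda_{k}\geq 0$ and $\sum_{k}\lambda_{k}=1$. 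Setting this equal to $\tilde v^{i}$ and invoking Assumption 1 (linear independence of $\tilde v^{1},\ldots,\tilde v^{n}$) forces $\lambda_{i}=1$ and $\lambda_{k}=0$ for $k\neq i$; combined with the full-support observation above, this forces $\sigma(h) = \besti$ for every $h$, i.e.\ $\sigma = \sigma^{*}$.

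It remains to characterize when $\sigma^{*}$ is a PPE. Under $\sigma^{*}$ the continuation strategy after every history is again $\sigma^{*}$, so each player's continuation utility is the constant $\tilde v^{i}$, independent of any current action or realized signal. By the one-shot deviation principle, $\sigma^{*}$ is a PPE if and only if no player $j$ can strictly increase $U_{j}(a_{j},\besti_{-j})$ by altering $\besti_{j}$. For $j=i$ this holds automatically since $\besti$ is by definition the unique maximizer of $U_{i}$, and in particular $\besti_{i} \in \argmax_{a_{i}} U_{i}(a_{i},\besti_{-i})$. For $j\neq i$ the absence of a profitable one-shot deviation is precisely the stated condition $u_{j}(a_{j},\besti_{-j}) \leq u_{j}(\besti)$ for all $a_{j}\in A_{j}$.

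The only delicate step is the first: showing that the only candidate PPE achieving $\tilde v^{i}$ is the stationary strategy $\sigma^{*}$. This pieces together the three stage-game assumptions in a single chain — Assumption 2 confines play to $\Best$, Assumption 3 lets us upgrade ``almost every history'' to ``every history,'' and Assumption 1 rules out any nontrivial mixing between the $\tilde v^{k}$'s. Once that reduction is established, sufficiency is trivial (the condition makes $\sigma^{*}$ immune to one-shot deviations, and hence a PPE) and necessity is immediate (if some $j\neq i$ had a profitable one-shot deviation from $\sigma^{*}$, then $\sigma^{*}$ would not be a PPE, and by the reduction no other PPE yields $\tilde v^{i}$).
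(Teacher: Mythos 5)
Your proof is correct, and since the paper declares the proof of Proposition 1 ``immediate and omitted,'' your argument is exactly the filling-in the authors had in mind: Assumption 2 forces play in $\Best$ on every positive-probability history, full support (Assumption 3) upgrades this to every history, Assumption 1 pins down the stationary strategy $\sigma^* \equiv \besti$ as the only candidate, and constancy of continuation values reduces the equilibrium check to the stated one-shot condition. The only cosmetic issue is your reuse of $\lambda_k$ for the discounted occupation frequencies, which collides with the paper's hyperplane weights from Assumption 2; rename those coefficients.
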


\bigskip

\begin{proposition}\label{prop:deviation-j}  If $\sigma$ is an efficient PPE (for any discount factor $\delta$) and $i$ is active following some history (i.e., $\sigma(h) = \besti$ for some $h$) then
\begin{eqnarray}
\alpha(i,j)\leq \beta(i,j)
\end{eqnarray}
 for every $j \in N, j \not=i$.
\end{proposition}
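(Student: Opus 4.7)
The plan is to use the one-shot deviation principle at the history $h$ where $\sigma(h)=\besti$ to derive an incentive constraint for each $j\neq i$, and then exploit the two-signal structure to collapse that constraint into a single linear inequality in the continuation utilities. Because $\sigma$ is efficient, every continuation action profile lies in $\Best$, so in particular after each of the two signals $y\in\{y^i_g,y^i_b\}$ there are well-defined continuation utilities $v_j^g,v_j^b$ for player $j$, and the IC for $j$ at $h$ takes the form
\begin{equation*}
(1-\delta)\bigl[u_j(a_j,\besti_{-j})-u_j(\besti)\bigr]\;\leq\;\delta\bigl[\rho(y^i_b\mid a_j,\besti_{-j})-\rho(y^i_b\mid\besti)\bigr](v_j^g-v_j^b)
\end{equation*}
for every $a_j\in A_j$. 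Here I have used that with only two signals the expected-continuation difference reduces to $(\rho(y^i_b\mid a_j,\besti_{-j})-\rho(y^i_b\mid\besti))(v_j^g-v_j^b)$, since probabilities sum to one.

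Next I would handle the two cases separately. For a strictly profitable deviation $a_j$, Assumption~4 tells us the denominator on the right is strictly positive, so dividing gives
\begin{equation*}
\frac{u_j(a_j,\besti_{-j})-u_j(\besti)}{\rho(y^i_b\mid a_j,\besti_{-j})-\rho(y^i_b\mid\besti)}\;\leq\;\frac{\delta}{1-\delta}(v_j^g-v_j^b).
\end{equation*}
Taking the supremum over strictly profitable $a_j$ yields $\alpha(i,j)\leq\tfrac{\delta}{1-\delta}(v_j^g-v_j^b)$. Since Assumption~4 also guarantees the existence of some strictly profitable deviation, $\alpha(i,j)$ is a strictly positive real, and in particular $v_j^g>v_j^b$. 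For a deviation $a_j$ that is strictly unprofitable and also strictly decreases the probability of the bad signal, both sides of the IC inequality are negative; dividing through by the (negative) denominator flips the sign, giving
\begin{equation*}
\frac{u_j(a_j,\besti_{-j})-u_j(\besti)}{\rho(y^i_b\mid a_j,\besti_{-j})-\rho(y^i_b\mid\besti)}\;\geq\;\frac{\delta}{1-\delta}(v_j^g-v_j^b).
\end{equation*}
Taking the infimum over such deviations yields $\beta(i,j)\geq\tfrac{\delta}{1-\delta}(v_j^g-v_j^b)$. Chaining the two inequalities produces $\alpha(i,j)\leq\beta(i,j)$, and the vacuous cases (no profitable or no ``slack'' unprofitable deviation) are handled by the conventions $\alpha=-\infty$ or $\beta=+\infty$.

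The main obstacle I expect is bookkeeping around the sign of $v_j^g-v_j^b$ and the denominators, since dividing inequalities by quantities of ambiguous sign is exactly where such arguments usually slip; it is essential to appeal to Assumption~4 first to establish that at least one profitable deviation exists (so $\alpha(i,j)>-\infty$ and $v_j^g>v_j^b$), and then to separate deviations by whether they raise or lower the bad-signal probability before taking suprema/infima. Everything else is the standard APS unwinding of the continuation values, which is justified here because efficiency forces the continuation profile to lie in $\Best$ after each signal and so the continuation utilities are well-defined components of a PPE payoff vector.
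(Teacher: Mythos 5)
Your proposal is correct and follows essentially the same route as the paper's proof: write the one-shot incentive constraint at the history where $\besti$ is played, use the two-signal structure to reduce the continuation-value difference to $[\rho(y^i_b\mid a_j,\besti_{-j})-\rho(y^i_b\mid\besti)][\gamma_j(y^i_g)-\gamma_j(y^i_b)]$, divide by the probability difference with the correct sign in each of the two cases, and take the supremum over profitable deviations and the infimum over unprofitable bad-signal-decreasing deviations so that both $\alpha(i,j)$ and $\beta(i,j)$ are squeezed against the common quantity $\tfrac{\delta}{1-\delta}[\gamma_j(y^i_g)-\gamma_j(y^i_b)]$. The vacuous-set conventions are handled exactly as in the paper, so there is nothing to add.
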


\bigskip

\begin{proposition}\label{prop:deviation-i}  If $\sigma$ is an efficient PPE (for any discount factor $\delta$) and $i$ is active following some history (i.e., $\sigma(h) = \besti$ for some $h$) then
\begin{eqnarray}
\tilde{v}_i^i-u_i(a_i,\bm{\tilde{a}}_{-i}^i) \geq \frac{1}{\lambda_i} \, \sum_{j\neq i} \lambda_j \,
\alpha(i,j) \left[ \rho(y_b^i|a_i,\besti_{-i})-\rho(y_b^i|\besti) \right]
\end{eqnarray}
\end{proposition}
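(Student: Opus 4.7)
The plan is to combine the hyperplane structure forced by efficiency with the incentive constraints of \emph{all} players at a history where $\besti$ is called for. Fix a public history $h$ with $\sigma(h)=\besti$ and let $w(y^i_g), w(y^i_b)$ denote the continuation payoff vectors following the two signals. Since $\sigma$ is an efficient PPE, both continuation vectors lie in $V\subset H$, so $\sum_k \lambda_k w_k(y)=1$ for $y\in\{y^i_g,y^i_b\}$; subtracting yields the identity
\begin{equation*}
\sum_{k=1}^n \lambda_k\bigl[w_k(y^i_g)-w_k(y^i_b)\bigr]=0.
\end{equation*}

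The first step is to translate each other player $j$'s IC at $h$ into a lower bound on the reward spread $w_j(y^i_g)-w_j(y^i_b)$. Writing $j$'s no-deviation condition against any profitable $a_j$, collecting terms using $\rho(y^i_g|\cdot)+\rho(y^i_b|\cdot)=1$, and dividing by the (strictly positive, by Assumption~4) signal differential gives
\begin{equation*}
w_j(y^i_g)-w_j(y^i_b)\;\ge\;\frac{1-\delta}{\delta}\cdot\frac{u_j(a_j,\besti_{-j})-u_j(\besti)}{\rho(y^i_b|a_j,\besti_{-j})-\rho(y^i_b|\besti)}.
\end{equation*}
Taking the supremum over profitable $a_j$ yields $w_j(y^i_g)-w_j(y^i_b)\ge \tfrac{1-\delta}{\delta}\,\alpha(i,j)$ for every $j\ne i$. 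Note that $\alpha(i,j)$ must be finite here, else $w_j$ would be unbounded, contradicting compactness of $\mathrm{co}(U(\Act))$.

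Multiplying these bounds by $\lambda_j$, summing over $j\ne i$, and using the hyperplane identity to isolate the $i$-th term gives
\begin{equation*}
\lambda_i\bigl[w_i(y^i_b)-w_i(y^i_g)\bigr]\;\ge\;\frac{1-\delta}{\delta}\sum_{j\ne i}\lambda_j\,\alpha(i,j).
\end{equation*}
The final step invokes $i$'s own IC at $h$ against the candidate deviation $a_i$:
\begin{equation*}
(1-\delta)\bigl[\tilde{v}_i^i-u_i(a_i,\besti_{-i})\bigr]\;\ge\;\delta\bigl[\rho(y^i_b|a_i,\besti_{-i})-\rho(y^i_b|\besti)\bigr]\bigl[w_i(y^i_b)-w_i(y^i_g)\bigr].
\end{equation*}
When the probability differential on the right is nonnegative, substituting the preceding bound and cancelling $(1-\delta)$ delivers precisely the claimed inequality. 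When it is negative, the right-hand side of the target is nonpositive (each $\alpha(i,j)>0$ by Assumption~4) while the left-hand side is nonnegative because $\besti$ maximizes $U_i$, so the bound holds trivially.

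The subtle step is the second one: $i$'s own IC alone does not pin down the sign of $w_i(y^i_b)-w_i(y^i_g)$, yet the hyperplane identity forces $i$'s continuation differential to move opposite (in weighted average) to those of the other players, which is exactly what converts their IC constraints into a usable lower bound on $w_i(y^i_b)-w_i(y^i_g)$ that can then be fed into $i$'s IC.
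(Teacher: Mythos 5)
Your proposal is correct and follows essentially the same route as the paper's proof: you use the inactive players' incentive constraints together with the hyperplane identity for continuation payoffs in $V\subset H$ to bound the active player's continuation spread $w_i(y^i_b)-w_i(y^i_g)$ from below, then feed that into player $i$'s own incentive constraint, with the same trivial case when $\rho(y^i_b|a_i,\besti_{-i})-\rho(y^i_b|\besti)$ is negative. The added observation that $\alpha(i,j)$ must be finite (by compactness of the payoff set) is a small but welcome point the paper leaves implicit.
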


\bigskip

The import of Propositions 2 and 3 is that if any of these inequalities fail then certain efficient payoff vectors can {\em never} be achieved in PPE, no matter what the discount factor is.  In the next Sections, we show how these inequalities and other conditions yield necessary and sufficient conditions that certain sets be self-generating and hence yield sufficient conditions for efficient PPE.

Proposition 2 might seem quite mysterious: $\alpha$ is a measure of the current gain to deviation and $\beta$ is a measure of the future gain to deviation; there seems no obvious reason why PPE should necessitate any particular relationship between $\alpha$ and $\beta$.  As the proof will show, however, the assumption of two signals and the efficiency of payoffs in $V$ imply that $\alpha$ is bounded above and $\beta$ is bounded below by the same quantity, which is a weighted difference of continuation values -- a quantity that does have an obvious connection to PPE.

\section{Characterizing Efficient Self-Generating Sets}

As in the previous Section, we consider a fixed reduced form and maintain the notation and assumptions of Section 2.  In order to find efficient PPE payoffs we follow APS and look for self-generating sets of efficient payoffs.

Fix a subset $W \subset {\rm co}[U(\Act)]$ and a {\em target payoff} $v \in {\rm co}[U(\Act)]$.  Recall from APS that $v$ can be {\em decomposed with respect to} $W$ (for a given discount factor $\delta < 1$) if there exist an action profile $\act \in \Act$ and continuation payoffs $\gamma: Y \to W$ such that
\begin{itemize}
\item $v$ is the (weighted) average of current and continuation payoffs when players follow $\act$
$$ v = (1-\delta)U(\act) + \delta \sum_{y \in Y}\rho(y|\act) \gamma(y) $$
\item continuation payoffs provide no incentive to deviate: for each $j$ and each $a_j \in A_j$
$$
v_j \geq (1-\delta)U(a_j, \act_{-j}) + \delta \sum_{y \in Y}\rho(y|a_j, \act_{-j}) \gamma(y)
$$
\end{itemize}
 Write ${\mathcal B}(W,\delta)$ for the set of target payoffs $v \in {\rm co}[U(\Act)]$ that can be decomposed with respect to $W$ (for the discount factor $\delta$.  Recall that $W$ is {\em self-generating} if $W \subset {\mathcal B}(W,\delta)$; i.e., every target vector in $W$ can be decomposed with respect to $W$.

 Because $V$ lies in the hyperplane $H$, if $v \in V$ and it is possible to decompose  $v \in V$ with respect to {\em any} set and for {\em any} discount factor, then the associated action profile $\act$ must lie in  $\Best$ and the continuation payoffs must lie in $V$.  Because we are interested in efficient payoffs we can therefore restrict our search for self-generating sets to subsets $W \subset V$.     In order to understand which sets $W \subset V$ can be self-generating, we need to understand how players might profitably gain from deviating from the current recommended action profile.  Because we are interested in subsets $W \subset V$, the current recommended action profile will always be $\besti$ for some $i$, so we need to ask  how a player $j$ might profitably gain from deviating from
$\besti$.  For player $j \not= i$, a profitable deviation might occur in one of two ways: $j$ might gain by choosing an action $a_j \not= \besti_j$ that increases $j$'s {\em current} payoff or by choosing an action $a_j \not= \besti_j$ that alters the signal distribution in such a way as to increase $j$'s {\em future} payoff.  Because $\besti$ yields $i$ its best current payoff, a profitable deviation by $i$ might occur only by choosing an action that  that alters the signal distribution in such a way as to increase $i$'s {\em future} payoff.  In all cases, the issue will be the net of the current gain/loss against the future loss/gain.

We focus attention on sets of the form
$$
V_\mu = \{v \in V: v_i \geq \mu_i \mbox{ for each } i \}
$$
where $\mu \in \R^n$; we assume without further comment that $V_\mu \not=\emptyset$.  For lack of a better term, we say that $V_\mu$ is {\em regular} if for each $i \in N$ there is a vector $\hat{v}^i \in V_\mu$ such that $\hat{v}^i_j = \mu_j$ for each $j \not= i$.  Whether or not $V_\mu$ is regular depends both on the shape of $V$ and on the magnitude of $\mu$: see Figures  \ref{figure:V-mu1}, \ref{figure:V-mu2}, \ref{figure:V-mu3} for instance.  A few simple facts are useful to note:
\begin{itemize}
\item If $\tilde{v}^i_j = 0$ for all $i, j \in N$ with $i \not= j$ (as is the case in many resource sharing scenarios such as Examples 2, 3) then $V_\mu$ is regular for every $\mu \geq 0$.
\item If $V_\mu \not= \emptyset$ and $V_\mu$ is a subset of the interior of $V$ (relative to the hyperplane $H$) then $V_\mu$ is regular.
\item If $v$ lies in the interior of $V$ (relative to the hyperplane $H$) and $\mu = v - \epsilon \cdot {\bm 1}$ for $\epsilon > 0$ sufficiently small, then $v \in V_\mu$ and  $V_\mu$ is regular.
\item If $V_\mu$ is not a singleton then it must contain a point of the interior of $V$ (relative to the hyperplane $H$).
\end{itemize}
If $V_\mu$ is a singleton, it can only be a self-generating set (and hence achievable in a PPE) if $V_\mu =  \tilde{v}^i$ for $i$; because we have  already characterized this possibility in Proposition 1, we focus on the non-degenerate case in which $V_\mu$ is not a singleton and hence contains a point of the interior of $V$.  Note that a point in the interior of $V$ can only be achieved by a repeated game strategy in which  {\em all} players are active following some history.

\begin{figure}
\centering
\includegraphics[width =3.0in]{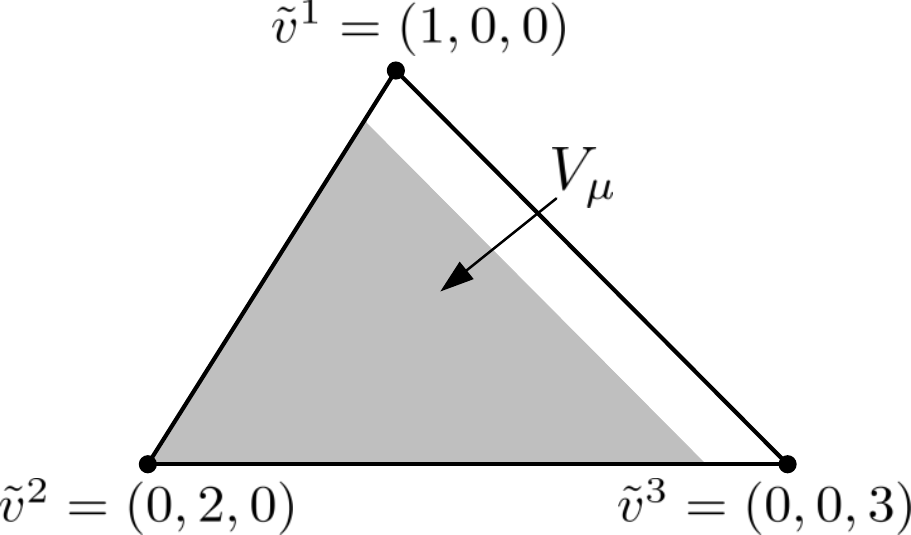}
\caption{$\mu = (0,1/4,0)$; $V_\mu$ is regular}\label{figure:V-mu1}
\end{figure}

\begin{figure}
\centering
\includegraphics[width =3.0in]{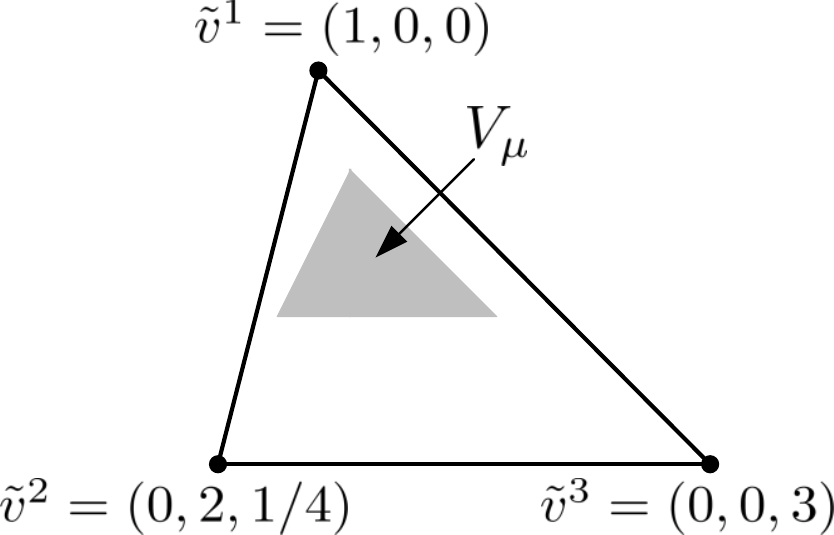}
\caption{$\mu = (1/2,1/2,1/2)$; $V_\mu$ is regular}\label{figure:V-mu2}
\end{figure}

\begin{figure}
\centering
\includegraphics[width =3.0in]{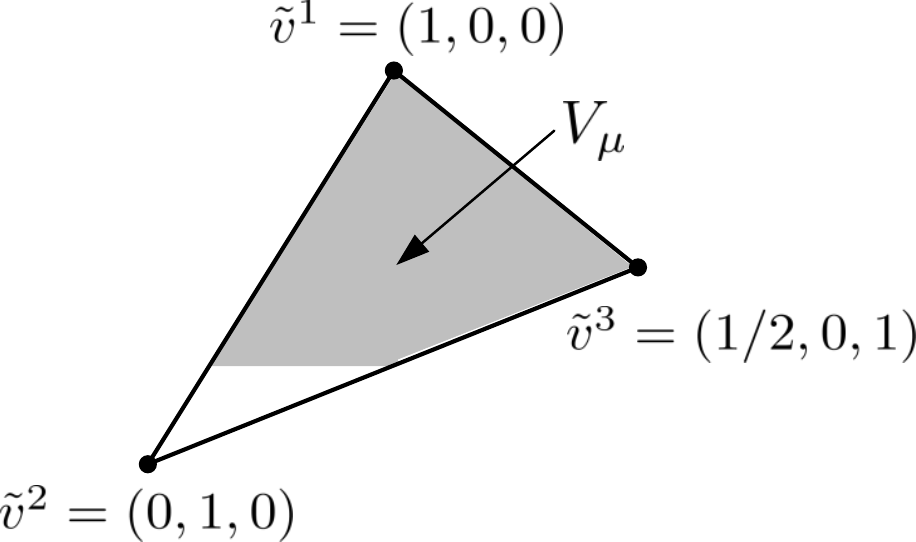}
\caption{$\mu = (1/4,0,0)$; $V_\mu$ is not regular}\label{figure:V-mu3}
\end{figure}

The following result provides necessary and sufficient conditions on $\mu$, the payoff structure, the information structure and the discount factor that a regular $V_\mu$ be a self-generating set.


\bigskip

\begin{theorem}\label{theorem:CharacterizeEquilibriumPayoff}  Fix $\mu$; assume that $V_\mu$ is regular and not an extreme point of $V$.   In order that $V_\mu$ be a self-generating set, it is necessary and sufficient that the following conditions be satisfied:
\begin{itemize}
\item[{}] {\bf Condition~1 } for all $i, j \in N$ with $i \not= j$:
\begin{eqnarray}
\alpha(i,j) \leq \beta(i,j)
\end{eqnarray}
\item[{}]  {\bf Condition~2 } for all $i \in N$ and all $a_i \in A_i$:
\begin{eqnarray}
\tilde{v}_i^i-u_i(a_i,\bm{\tilde{a}}_{-i}^i) \geq \frac{1}{\lambda_i} \, \sum_{j\neq i} \lambda_j \,
\alpha(i,j) \left[ \rho(y_b^i|a_i,\besti_{-i})-\rho(y_b^i|\besti) \right]
\end{eqnarray}
\item[{}]  {\bf Condition~3 } for all $ i \in N$:
\begin{eqnarray}
\mu_i \geq \max_{j\neq i} \left( \tilde{v}_i^j + \alpha(j,i)[1-\rho(y_b^j|\bm{\tilde{a}}^j)] \right)
\end{eqnarray}
\item[{}] {\bf  Condition~4 } the discount factor $\delta$ satisfies:
\begin{eqnarray}
\delta \geq \underline{\delta}_{\mu} \triangleq \left(1+\frac{1-\sum\limits_{i} \lambda_i\mu_i}{\sum\limits_{i} \left[  \lambda_i\tilde{v}_i^i +  \sum\limits_{j\neq i}
\lambda_j \, \alpha(i,j) \, \rho(y_b^i|\bm{\tilde{a}}^i) \right]  - 1} \right) ^{-1}
\end{eqnarray}
\end{itemize}
\end{theorem}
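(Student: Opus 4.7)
The plan is to reduce the decomposability of each $v \in V_\mu$ to a finite set of linear inequalities, identify which single-point constraints in $V_\mu$ are binding, and show that Conditions 1--4 capture exactly these binding constraints. As a preliminary observation, since every $v \in V_\mu \subset H$ and continuation payoffs must also lie in $V_\mu \subset H$, taking the $\lambda$-weighted inner product of the decomposition equation forces $\sum_j \lambda_j U_j(\bm{a}) = 1$, so by Assumption 2 the decomposing profile must lie in $\Best$, say $\bm{a} = \bm{\tilde{a}}^k$. Writing $\Delta_j := \gamma_j(y_g^k) - \gamma_j(y_b^k)$, the hyperplane constraint on both continuations determines $\Delta_k = -\lambda_k^{-1}\sum_{j \neq k}\lambda_j\Delta_j$. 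The incentive constraint for $j \neq k$ then collapses, using Assumption 4, to $\Delta_j \in [\frac{1-\delta}{\delta}\alpha(k,j),\,\frac{1-\delta}{\delta}\beta(k,j)]$, with the lower (resp.\ upper) bound forced by profitable (resp.\ non-profitable but signal-reducing) deviations; non-emptiness of this range is precisely Condition 1. Substituting the smallest allowable $\Delta_j = \frac{1-\delta}{\delta}\alpha(k,j)$---which minimizes $|\Delta_k|$ and hence simultaneously relaxes player $k$'s IC and the lower bounds on all $\gamma_j(\cdot)$---player $k$'s own IC collapses to Condition 2.

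For necessity, Conditions 1 and 2 follow from Propositions 2 and 3 applied to any PPE achieving a relative-interior point of $V_\mu$, where each $\bm{\tilde{a}}^k$ must be active at some history. For Condition 3, consider the extreme point $\hat{v}^i \in V_\mu$ (where $\hat{v}^i_j = \mu_j$ for $j \neq i$): only $\bm{\tilde{a}}^i$ can decompose $\hat{v}^i$, because using $\bm{\tilde{a}}^l$ for any $l \neq i$ would require the $l$-coordinate of some continuation payoff to strictly exceed $\tilde{v}_l^l$, which is impossible; then the constraint $\gamma_j(y_b^i) \geq \mu_j$ for $j \neq i$ rearranges exactly to Condition 3. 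For Condition 4, once Condition 3 is in force the constraints $\gamma_j(y_b^k) \geq \mu_j$ for $j \neq k$ become automatic on $V_\mu$, so the only remaining binding continuation constraint for using $\bm{\tilde{a}}^k$ is $v_k \geq C_k^k$, where $C_k^k := (1-\delta)\bigl[\tilde{v}_k^k + \rho(y_b^k|\bm{\tilde{a}}^k)\lambda_k^{-1}\sum_{j \neq k}\lambda_j\alpha(k,j)\bigr] + \delta\mu_k$. Since $\sum_k \lambda_k v_k = 1$ on $V_\mu \subset H$, the cones $\{v \in V_\mu : v_k \geq C_k^k\}$ cover $V_\mu$ if and only if $\sum_k \lambda_k C_k^k \leq 1$, and a short computation shows this rearranges to $\delta \geq \underline{\delta}_\mu$. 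Conversely, when $\delta < \underline{\delta}_\mu$ one constructs an undecomposable $v \in V_\mu$ with $v_k < C_k^k$ for every $k$, by choosing $v_k = C_k^k - \epsilon_k$ with $\epsilon_k \in [0, C_k^k - \mu_k]$ satisfying $\sum_k \lambda_k \epsilon_k = \sum_k \lambda_k C_k^k - 1 > 0$.

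For sufficiency, given Conditions 1--4 and any $v \in V_\mu$, the covering argument produces some $k$ with $v_k \geq C_k^k$; decompose $v$ via $\bm{\tilde{a}}^k$ with $\Delta_j = \frac{1-\delta}{\delta}\alpha(k,j)$ for $j \neq k$. Conditions 1 and 2 deliver all incentive-compatibility conditions; Condition 3 delivers $\gamma_j(y_b^k) \geq \mu_j$ for $j \neq k$; the choice of $k$ delivers $\gamma_k(y_g^k) \geq \mu_k$; and the opposite inequalities $\gamma_k(y_b^k) \geq \mu_k$ and $\gamma_j(y_g^k) \geq \mu_j$ ($j \neq k$) follow automatically from the signs $\Delta_j \geq 0 \geq \Delta_k$. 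The main obstacle I expect is the exact equivalence between Condition 4 and the covering property---specifically, constructing a genuinely undecomposable $v \in V_\mu$ when Condition 4 fails, since this requires not only selecting coordinates $v_k \in [\mu_k, C_k^k)$ with the right weighted sum but also verifying the resulting vector actually lies in $V$ (and not merely in $H \cap \{v \geq \mu\}$), which is where the regularity hypothesis on $V_\mu$ is doing real work.
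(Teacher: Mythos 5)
Your proposal is correct and takes essentially the same route as the paper's proof: restrict the decomposing profile to $\Best$ via the hyperplane, collapse the inactive players' incentive constraints to the interval $\left[\frac{1-\delta}{\delta}\alpha(k,j),\frac{1-\delta}{\delta}\beta(k,j)\right]$ for the continuation gap, use the binding (minimal) gap, read Condition~3 off the vertices $\hat v^i$, and obtain Condition~4 by requiring every $v\in V_\mu$ to be decomposable by some $\bm{\tilde a}^k$ --- your covering inequality $\sum_k\lambda_k C_k^k\le 1$ being an equivalent, somewhat cleaner packaging of the paper's max--min computation over $x_i(v)$. One small slip: the reason only $\besti$ can decompose $\hat v^i$ is that using $\bm{\tilde a}^l$ for $l\ne i$ would force some continuation's $l$-coordinate strictly \emph{below} $\mu_l$ (since the current payoff $\tilde v^l_l$ strictly exceeds the target $\mu_l$), not that some continuation coordinate would have to exceed $\tilde v^l_l$.
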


\bigskip

One way to contrast our approach with that of FLM (and FLT) is to think about the constraints that need to be satisfied to decompose a given target payoff $v$ with respect to a given set $V_\mu$. By definition we must find a current action profile $\bm a$ and continuation payoffs $\gamma$.  The achievability condition (that $v$ is the weighted combination of  the utility of the current action profile and the expected continuation values) yields a family of linear equalities.  The incentive compatibility conditions (that players must be deterred from deviating from $\bm a$) yields a family of linear inequalities. In the context of FLM, satisfying all these linear inequalities simultaneously requires a large and rich collection of signals so that many different continuation payoffs can be assigned to different deviations.  Because we have only two signals, we are only able to choose two continuation payoffs but still must satisfy the same family of inequalities -- so our task is much more difficult.  It is this difficulty that leads to the Conditions in Theorem 1.

Note that $\delta_\mu$ is {\em decreasing} in $\mu$.  Since Condition 3 puts an absolute lower bound on $\mu$ and Condition 4 puts an absolute lower bound on $\delta_\mu$ this means that (subject to the regularity constraint) there is a $\mu^*$ such that $V_{\mu^*}$ is the largest self-generating set (of this form) and $\delta_{\mu^*}$ is the smallest discount factor (for which any set of this form can be self-generating).  This may seem puzzling -- increasing the discount factor beyond a point makes no difference -- but remember that we are providing a characterization of self-generating sets and not of PPE payoffs.  However, as we shall see in Theorem 4, for the two-player case, we do obtain a complete characterization of (efficient) PPE payoffs and we demonstrate the same phenomenon.

\section{Perfect Public Equilibrium}

Because every payoff in a self-generating set can be achieved in a PPE, Theorem 1 immediately provides sufficient conditions achieving (some) given target payoffs in perfect public equilibrium.  In fact, we can provide an explicit algorithm for {\em computing} PPE strategies.  A consequence of this algorithm is that (at least when action spaces are finite), the constructed PPE enjoys an interesting and potentially useful robustness property.

\subsection{A Constructing Efficient Perfect Public Equilibria}

Given the various parameters of the environment (game payoffs, information structure, discount factor) and of the problem (lower bound, target vector), the algorithm takes as input in period $t$ the current continuation vector $v(t)$ and computes, for each player $j$, an indicator  $d_j(v(t))$ defined as follows:
$$
d_j(v(t)) = \frac{\lambda_j[v_j(t)-\mu_j]}{\lambda_j[\tilde{v}_j^j - v_j(t)] + \sum_{k\neq j} \lambda_k \, \alpha(j,k) \rho(y_b^j|\bm{\tilde{a}}^j)}
$$
(Note that each player can compute every $d_j$ from the current continuation vector $v(t)$ and the various parameters.)  Having computed $d_j(v(t))$ for each $j$, the algorithm finds the player $i^*$ whose indicator is greatest.   (In case of ties, we arbitrarily choose the player with the largest index.)  The current action profile is $i^*$'s preferred action profile $\bm{\tilde{a}}^{i^*}$.  The algorithm then uses the labeling $Y = \{y^{i^*}_g, y^{i^*}_b\} $ to compute continuation values for each signal in $Y$.

\begin{table}
\renewcommand{\arraystretch}{1.3}
\caption{The algorithm used by each player.} \label{table:EquilibriumStrategy} \centering
\begin{tabular}{l}
\hline
\textbf{Input:} The current continuation payoff $v(t) \in V_{\mu}$ \\
\hline
For each $j$\\
~~~~~~~~Calculate the indicator  $d_j(v(t))$  \\
Find the player $i$ with largest indicator (if a tie, choose largest $i$) \\
~~~~$i = \max_j \left\{ \arg\max_{j\in N} d_j(v(t)) \right\}$ \\
Player $i$ is active; chooses action $\bm{\tilde{a}}^{i}_i$ \\
Players $j \not=i$ are inactive; choose action $\bm{\tilde{a}}^{i}_j$\\
Update $v(t+1)$  as follows: \\
~~~~\textbf{if} $y^t=y_g^{i}$ \textbf{then} \\
~~~~~~~~$v_{i}(t+1)=\tilde{v}_{i}^{i}+(1/\delta) (v_{i}(t)-\tilde{v}_{i}^{i})-(1/\delta - 1) (1/\lambda_{i})\sum_{j\neq {i}} \lambda_j \alpha(i,j) \rho(y_b^{i}|\bm{\tilde{a}}^{i})$ \\
~~~~~~~~$v_j(t+1)=\tilde{v}_j^{i}+(1/\delta) (v_j(t)-\tilde{v}_j^{i})+(1/\delta - 1)\alpha(i,j)\rho(y_b^{i}|\bm{\tilde{a}}^{i})$ \\
~~~~~~~~~~~~ for all $j \not= i$\\
~~~~\textbf{if} $y^t=y_b^{i}$ \textbf{then} \\
~~~~~~~~$v_{i}(t+1)=\tilde{v}_{i}^{i}+(1/\delta) (v_{i}(t)-\tilde{v}_{i}^{i})+(1/\delta - 1) (1/\lambda_{i}) \sum_{j\neq {i}} \lambda_j \alpha(i,j) \rho(y_g^{i}|\bm{\tilde{a}}^{i})$ \\
~~~~~~~~$v_j(t+1)=\tilde{v}_j^{i}+(1/\delta) (v_j(t)-\tilde{v}_j^{i})-(1/\delta - 1) \alpha(i,j) \rho(y_g^{i} | \bm{\tilde{a}}^{i})$\\
~~~~~~~~~~~~ for all $j \not= i$\\
\hline
\end{tabular}
\end{table}

\bigskip

 \begin{theorem}\label{theorem:PPE}  If the conditions in Theorem 1 are satisfied, then  every payoff $v \in V_\mu$ can be achieved in a PPE. For $v \in V_\mu$, a PPE strategy profile that achieves $v$ can be computed by the algorithm in Table~\ref{table:EquilibriumStrategy}
\end{theorem}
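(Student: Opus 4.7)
The existence part is essentially immediate from Theorem 1 combined with the standard self-generation principle of APS. Under the stated hypotheses, Theorem 1 asserts that $V_\mu$ is self-generating for the discount factor $\delta$, and the APS recursion then guarantees that every $v \in V_\mu$ is the long-run average payoff of some PPE. The real content of Theorem 2 is therefore that the algorithm in Table~\ref{table:EquilibriumStrategy} provides an \emph{explicit} implementation of such a PPE, computable online by each player from only the current continuation vector $v(t)$ and the realized public signal.

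To verify that the algorithm yields a PPE achieving a prescribed target $v \in V_\mu$, I would set $v(0) = v$ and, at each step, check three properties familiar from the APS decomposition. First, \emph{promise-keeping}: the recommended action profile $\bm{\tilde{a}}^{i^*}$ together with the two continuation values $v(t+1\mid y_g^{i^*})$, $v(t+1\mid y_b^{i^*})$ dictated by the update rules satisfies $v(t) = (1-\delta)U(\bm{\tilde{a}}^{i^*}) + \delta\sum_{y}\rho(y\mid\bm{\tilde{a}}^{i^*})\,v(t+1\mid y)$; this reduces to direct algebra, since the update formulas were engineered precisely for this identity. Second, \emph{incentive compatibility}: no player $j$ strictly gains by deviating from $\bm{\tilde{a}}^{i^*}$. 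For $j \neq i^*$, Condition 1, through the inequality $\alpha(i^*,j) \leq \beta(i^*,j)$, allows the two available continuation values to simultaneously deter both upward (current-payoff-increasing) deviations and downward (signal-distorting) deviations by $j$; for $j = i^*$, Condition 2 deters own-action deviations that can only affect signals. Third, \emph{invariance}: the continuation vector $v(t+1)$ lies in $V_\mu$ for either signal realization.

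The main obstacle is invariance, and the selection of the active player $i^*$ via the indicator $d_j(v(t))$ is the crucial ingredient. That $v(t+1)$ remains on the hyperplane $H$ is an algebraic identity: one checks directly that the updates for coordinate $i^*$ and for coordinates $j \neq i^*$ were built so that $\sum_k \lambda_k v_k(t+1) = 1$ regardless of the realized signal. The lower-bound constraint $v(t+1) \geq \mu$ is the delicate part. After $y_b^{i^*}$ the coordinate $i^*$ rises while each inactive coordinate falls; after $y_g^{i^*}$ the roles reverse for coordinate $i^*$ alone. The indicator $d_j(v(t))$ is a normalized measure of the slack $\lambda_j[v_j(t)-\mu_j]$ against the worst-case one-step drop in coordinate $j$, so choosing $i^*$ to maximize $d_{i^*}$ picks the player whose slack is largest and thereby guarantees that no inactive coordinate is pushed below $\mu_j$ after $y_b^{i^*}$. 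Condition 3 handles the symmetric case in which the active coordinate falls after $y_g^{i^*}$, while Condition 4 gives the sharp bound on $1/\delta - 1$ that controls the absolute magnitude of each per-period update. Iterating this one-step verification and invoking the one-shot deviation principle then shows that the algorithm generates a PPE whose initial long-run average utility is $v$.
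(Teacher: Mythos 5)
Your overall architecture matches the paper's: existence follows from Theorem 1 plus the APS self-generation principle, and the substance of Theorem 2 is that the algorithm's update rules realize a valid decomposition at every step --- promise-keeping by construction, incentive compatibility from Conditions 1 and 2 (via the algebra of Propositions 2 and 3), and invariance $v(t+1)\in V_\mu$ from Conditions 3 and 4. However, your account of the invariance step --- the one place where the selection rule $i^*=\argmax_j d_j(v(t))$ actually does any work --- gets the mechanism backwards, and as written it would not go through.

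The coordinates at risk are not the inactive ones after the bad signal. Writing the inactive update as $v_j(t+1)=v_j(t)+(\frac{1}{\delta}-1)\big[(v_j(t)-\tilde v^{i^*}_j)\mp\alpha(i^*,j)\,\rho(\cdot\,|\,\bm{\tilde{a}}^{i^*})\big]$, Condition 3 gives $v_j(t)\ge\mu_j\ge\tilde v^{i^*}_j+\alpha(i^*,j)[1-\rho(y_b^{i^*}|\bm{\tilde{a}}^{i^*})]$, so the bracket is nonnegative for every inactive $j$ after \emph{either} signal: the inactive coordinates weakly rise and are safe no matter how $i^*$ is chosen. The only binding constraint is the \emph{active} coordinate after the \emph{good} signal, which falls by $(\frac{1}{\delta}-1)\big[(\tilde v^{i^*}_{i^*}-v_{i^*}(t))+\frac{1}{\lambda_{i^*}}\sum_{j\ne i^*}\lambda_j\alpha(i^*,j)\rho(y_b^{i^*}|\bm{\tilde{a}}^{i^*})\big]$; requiring it to stay above $\mu_{i^*}$ is exactly $d_{i^*}(v(t))\ge(1-\delta)/\delta$. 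This is where the argmax and Condition 4 enter: $\underline{\delta}_\mu$ is by construction $\max_{v\in V_\mu}\min_i x_i(v)$ with $x_i=1/(1+d_i)$, so $\delta\ge\underline{\delta}_\mu$ guarantees $\max_i d_i(v)\ge(1-\delta)/\delta$ for every $v\in V_\mu$, and selecting the maximizer makes the active player's constraint hold. Your version --- the indicator protects the inactive coordinates after $y_b$, while Condition 3 protects the active coordinate after $y_g$ --- cannot work: Condition 3 does not involve $\delta$ and so cannot control a drop whose magnitude scales with $\frac{1}{\delta}-1$, and the indicator $d_j$ compares $j$'s slack to the drop $j$ would suffer \emph{if $j$ were active}, not to anything that happens to $j$ while inactive. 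With the roles of Condition 3 and of Condition 4 together with the indicator restored, the remainder of your argument (one-shot deviation principle, iteration of the one-step decomposition) is the paper's proof.
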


\subsection{Robustness}

A consequence of our constructive algorithm is that, for generic values of the parameters of the environment and of the problem and for as many periods as we specify, the strategies we identify are locally constant in these parameters.  To make this precise, we assume for this subsection  that action spaces $A_i$ are finite.   The parameters of the model are the utility mapping $U: {\bm A} \to \R^n$ and the probabilities $\rho(\cdot | \cdot) : Y \times  {\bm A} \to [0,1]$.  Because the probabilities must sum to 1 and we require full support, the parameter space of the model is
$$
 \Omega = \left(R^n \times [0,1] \right)^{\bm A}
 $$
The parameters of the problem are the discount factor $\delta$, the constraint vector $\mu$ and the target profile $v^*$; because the target profile lies in a hyperplane, the parameter space for the particular problem is
$$
\Theta = (0,1) \times \R^n \times \R^{n-1}
$$
Let $\Xi \subset \Omega \times \Theta$ be the subset of parameters that satisfy the Conditions of Theorem 1.  For $\xi \in \Xi$,  the algorithm generates an strategy profile
$$
\sigma_\xi : \hist \to {\bf A}
$$
For $T \geq 0$ we write $\sigma^T_\xi$ for the restriction of $\sigma_\xi$ to the set $\hist^T$  of histories of length at most $T$.

\bigskip

\begin{theorem}  For each  $T \geq 0$ there is a subset  $\Xi_T \subset \Xi$ that is closed and has measure 0 with the property that the mapping $\xi \to \sigma^T_\xi : \Xi \to \hist^T$ is locally constant on the complement of $\Xi_T$.
\end{theorem}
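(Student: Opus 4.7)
The plan is to exploit the finite-action hypothesis together with the fact that the algorithm in Table~\ref{table:EquilibriumStrategy} is built from finitely many continuous operations (rational updates of the continuation vector $v(t)$) interleaved with discrete selections (the argmax producing $\besti$, the labeling $Y=\{y^i_g, y^i_b\}$, the sup/inf in the definitions of $\alpha(i,j)$ and $\beta(i,j)$, and the active-player choice $i^* = \arg\max_j d_j$). Each discrete selection is locally constant in $\xi$ away from a ``tie'' locus, so the theorem reduces to showing that for finite depth $T$ these tie loci are finite in number, closed, and of Lebesgue measure zero in $\Xi$.

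First I would enumerate $\hist^T$, which is finite because $|Y|=2$. For each $h \in \hist^T$, recursively applying the update rules in Table~\ref{table:EquilibriumStrategy} expresses the continuation vector $v(h,\xi)$ as a rational function of $\xi$ whose denominators are bounded away from zero on $\Xi$ by Conditions~1--4 of Theorem~\ref{theorem:CharacterizeEquilibriumPayoff}. Consequently each indicator $d_j(v(h,\xi))$ is a continuous (in fact rational) function of $\xi$ on $\Xi$.

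Next I would define $\Xi_T$ as the union, over all $h \in \hist^T$ and all players, of the following subsets of $\Xi$: (a) parameters where the argmax defining $\besti$ is not unique (a finite union of hyperplanes $\{U_i(\act)=U_i(\act')\}$); (b) parameters where the labeling of signals is not strictly determined, or where the argmax/argmin in $\alpha(i,j)$ or $\beta(i,j)$ is attained by more than one action; and (c) parameters at which, tracing the algorithm along the path to $h$, two distinct indicators $d_j$ and $d_k$ are tied at the maximum. Each of (a)--(c) is the zero set of a continuous function that is not identically zero on $\Xi$ --- the latter because one may perturb the underlying utilities, transition probabilities, or the problem parameters $(\delta,\mu,v^*)$ to break any particular equality --- hence a closed subset of codimension at least one and of Lebesgue measure zero. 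Because $T$ is fixed the union is finite, so $\Xi_T$ is closed and has measure zero.

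For $\xi \in \Xi \setminus \Xi_T$ all strict inequalities defining the algorithm's discrete choices persist in a neighborhood, so $\besti$, the labeling of signals, the coefficients $\alpha(i,j)$, and the active player $i^*$ at every node of depth at most $T$ are locally constant; hence $\sigma^T_\xi$ is locally constant on $\Xi \setminus \Xi_T$. The main technical subtlety I expect lies in part (c): one must verify that each rational function $d_j(v(h,\xi)) - d_k(v(h,\xi))$ is not identically zero on $\Xi$. This should follow from the fact that $\mu_j$ enters $d_j$ linearly (and enters only $d_j$ among the indicators) while the initial target $v^*$ enters $v(h,\xi)$ affinely, so varying these parameters within $\Xi$ generically separates any prescribed pair of indicators and breaks the tie.
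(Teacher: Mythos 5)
Your proposal is correct and follows essentially the same route as the paper's proof: peel off, level by level, the closed tie loci where two indicators (or two maximizers) coincide, observe that away from these loci every discrete choice made by the algorithm is locally constant by continuity, and argue that the tie loci are closed and of measure zero. The only substantive difference lies in how nullity is justified --- the paper invokes general facts about semi-algebraic sets (the indicators are continuous semi-algebraic functions no two of which coincide on an open set), whereas you argue directly with rational functions; be aware that ``zero set of a continuous function that is not identically zero'' does not by itself imply measure zero, so your argument must (as it in fact can) lean on the rational/semi-algebraic structure you established, and note that your explicit non-degeneracy argument via the $\mu_j$-dependence of $d_j$ supplies a detail the paper merely asserts.
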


\bigskip

In words: if $\xi, \xi'$ are close together and neither lies in the  proscribed small set of parameters $\Xi_T$, then the strategies $\sigma_\xi, \sigma_{\xi'}$ {\em coincide} for at least the first $T$ periods.

\subsection{Comparison with FLT }

As we have commented in the Introduction, our approach provides a great deal of information about the efficient payoffs that can be achieved in PPE but because the sets $V_\mu$ are required to have a special form, it does not find all of them.  Here we provide a simple example.  We consider a $3 \times 3$ game.  Each player chooses from the actions $\{l,m,h\}$: Player 1 chooses rows, Player 2 chooses columns, Player 3 chooses matrices; see Table \ref{table:3x3}.   (Payoffs indicated by $*$ are irrelevant so long as Assumptions 1,2 are satisfied; we could take $* = 0$ everywhere.)  There are two signals $y_g, y_b$ and the signal structure is
$$
\rho(y_g | \act) = \left\{ \begin{array}{rcl}
                                     2/3 & \mbox{ if } & \act = (h,\ell, \ell) \mbox{ or any permutation } \\
                                     1/2 & \mbox{ if } & \act = (h,m, \ell) \mbox{ or any permutation } \\
                                     1/3 &  &  \mbox{ otherwise } \end{array} \right.
$$
Note that $\tilde{\bm a}^1 = (h,\ell, \ell)$,
$\tilde{\bm a}^2 = (\ell, h, \ell)$, $\tilde{\bm a}^3 = (\ell,\ell, h)$ and that $\tilde{v}^1 = (1, .5, 0)$, $\tilde{v}^2 = (0, 1, .5)$, $\tilde{v}^3 = (.5, 0, 1)$.   Condition 3 implies that no regular $V_\mu$ can be  a self-generating set (because we would have to have $\mu_i > .5$ for each $i$), so our approach does not find any PPE.  However, applying the machinery of FLT shows that there is a discount factor $\hat{\delta} < 1$ for which the payoff vector $(.5,.5,.5)$ -- indeed, any efficient payoff vector close to $(.5,.5,.5)$ -- can be achieved in PPE.\footnote{Calculations available from the authors by request.}  As noted in the Introduction, however, FLT provides no information as to what $\hat{\delta}$ must be nor does it construct PPE strategies.

\begin{table}
\renewcommand{\arraystretch}{1.0}
\caption{Payoff Matrices for the $3 \times 3$ Game; Player 3 Chooses $\ell$, $m$, $h$ (respectively)} \label{table:3x3} \centering
\begin{tabular}{r|c|c|c|}
\multicolumn{1}{r}{}
 &  \multicolumn{1}{c}{}
 & \multicolumn{1}{c}{}
 & \multicolumn{1}{c}{} \\

\multicolumn{1}{r}{}
 &  \multicolumn{1}{c}{$\ell$}
 & \multicolumn{1}{c}{$m$}
 & \multicolumn{1}{c}{$h$} \\
\cline{2-4}
$\ell$ & $(*,*,*)$ & $(*,*,*)$ & $(0, 1, 0.5)$ \\
\cline{2-4}
$m$ & $(*,*,*)$ & $(*,*,*)$ & $(0.1, *, *)$ \\
\cline{2-4}
$h$ & $(1, 0.5, 0)$ & $(*, 0.55, *)$ & $(0.2, 0.6, *)$ \\
\cline{2-4}

\multicolumn{1}{r}{}
 &  \multicolumn{1}{c}{}
 & \multicolumn{1}{c}{}
 & \multicolumn{1}{c}{} \\

 \multicolumn{1}{r}{}
 &  \multicolumn{1}{c}{}
 & \multicolumn{1}{c}{}
 & \multicolumn{1}{c}{} \\

\multicolumn{1}{r}{}
 &  \multicolumn{1}{c}{$\ell$}
 & \multicolumn{1}{c}{$m$}
 & \multicolumn{1}{c}{$h$} \\
\cline{2-4}
$\ell$ & $(*,*,*)$ & $(*,*,*)$ & $(*,*,0.55)$ \\
\cline{2-4}
$m$ & $(*,*,*)$ & $(*,*,*)$ & $(*,*,*)$ \\
\cline{2-4}
$h$ & $(*,*,0.1)$ & $(*,*,*)$ & $(*,*,*)$ \\
\cline{2-4}

\multicolumn{1}{r}{}
 &  \multicolumn{1}{c}{}
 & \multicolumn{1}{c}{}
 & \multicolumn{1}{c}{} \\

 \multicolumn{1}{r}{}
 &  \multicolumn{1}{c}{}
 & \multicolumn{1}{c}{}
 & \multicolumn{1}{c}{} \\

\multicolumn{1}{r}{}
 &  \multicolumn{1}{c}{$\ell$}
 & \multicolumn{1}{c}{$m$}
 & \multicolumn{1}{c}{$h$} \\
\cline{2-4}
$\ell$ & $(0.5,0,1)$ & $(*, 0.1, *)$ & $(*,0.2,0.6)$ \\
\cline{2-4}
$m$ & $(0.55,*,*)$ & $(*,*,*)$ & $(*,*,*)$ \\
\cline{2-4}
$h$ & $(0.6,*,0.2)$ & $(*,*,*)$ & $(*,*,*)$ \\
\cline{2-4}
\end{tabular}
\end{table}

\section{Two Players}

Theorem 1 provides a complete characterization of self-generating sets that have a special form.  If there are only two players then maximal self-generating sets -- the set of all PPE -- have this form and so it is possible to provide a complete characterization of PPE.  We focus on what seems to be the most striking finding:  either there are no efficient PPE outcomes at all (for any discount factor $\delta < 1$) or there is a discount factor $\delta^* < 1$ with the property that any target payoff in $V$ that can be achieved as a PPE for {\em some} $\delta$ can already be achieved for {\em every} $\delta \geq \delta^*$.

\bigskip

\begin{theorem}  Assume $N = 2$ (two players).  Either
\begin{enumerate}[(i)]
\item no target profile in $V$ can be supported in a PPE for any $\delta < 1$ {\em or}
\item there  exist $\overline{\mu}_1,\overline{\mu}_2$  and a discount factor $\delta^* < 1$ such that if $\delta$ is any discount factor with $\delta^* \leq \delta < 1$
then the set of payoff vectors that can be supported in a PPE when the discount factor is $\delta$ is precisely
$$
E = \{v \in V : v_i \geq \overline{\mu}_i \mbox{ for } i = 1, 2 \}
$$
\end{enumerate}
 \end{theorem}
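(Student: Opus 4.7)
The plan is to combine the characterization of self-generating sets given by Theorem~1 with the one-dimensional geometry of the efficient face when $n=2$.

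First I would observe that, by Assumption~1, the vectors $\tilde v^1,\tilde v^2$ are linearly independent, so $V$ is a closed line segment in $\R^2$. Every closed sub-interval of $V$ can be written as $V_\mu$ (for $\mu_i$ the minimum of the $i$-th coordinate along the interval); conversely, every non-empty $V_\mu$ is a closed sub-interval of $V$; and any non-singleton $V_\mu$ is automatically regular, since the endpoint that minimizes the $i$-th coordinate has $v_i=\mu_i$. Thus, for $n=2$, Theorem~1 in fact characterizes \emph{every} non-trivial closed sub-interval of $V$ that can be self-generating.

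Next I would examine the $\delta$-independent Conditions~1 and~2 of Theorem~1. If either fails, Propositions~2 and~3 rule out all efficient PPE regardless of $\delta$, giving case~(i). Otherwise define
$$
\overline\mu_i \;:=\; \tilde v_i^j + \alpha(j,i)\bigl[1-\rho(y_b^j\mid \bm{\tilde a}^j)\bigr],\qquad j\neq i;
$$
for $n=2$ this is exactly the right-hand side of Condition~3 and is the coordinatewise smallest $\mu$ compatible with it. If $\sum_i\lambda_i\overline\mu_i\geq 1$, so that $V_{\overline\mu}$ is empty or a single vertex, then at most a single vertex could be a PPE outcome (by Proposition~1), leaving case~(i) in the non-degenerate situation.

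In the remaining case I would set $\delta^*:=\underline\delta_{\overline\mu}$, obtained from Condition~4. For any $\delta\geq\delta^*$, all four Conditions of Theorem~1 hold with $\mu=\overline\mu$, so $V_{\overline\mu}$ is self-generating and Theorem~2 yields that every point of $E:=V_{\overline\mu}$ is a PPE payoff. For the reverse inclusion, I would argue that the set $E^{\mathrm{eff}}$ of efficient PPE payoffs is itself self-generating inside $V$ (any decomposition of an efficient PPE payoff uses an action in $\{\bm{\tilde a}^1,\bm{\tilde a}^2\}$ with continuations again in $E^{\mathrm{eff}}$), and that, because $V$ is one-dimensional, its closed convex hull is a closed sub-interval of $V$ that remains self-generating. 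That hull then has the form $V_{\mu^*}$, and the necessity direction of Theorem~1 forces $\mu^*\geq\overline\mu$ coordinatewise, so every efficient PPE payoff lies in $V_{\overline\mu}=E$.

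The hardest step will be justifying that the entire set of efficient PPE payoffs can be packaged as a single interval $V_{\mu^*}$, rather than merely being contained in one. One-dimensionality of $V$ is essential here: in higher dimension the maximal self-generating set need not have the form $V_\mu$ (cf.\ the $3\times 3$ example of Section~6.3). The key technical point is that, on a line segment, closed convex hulls of self-generating sub-intervals remain self-generating, because continuation payoffs combine linearly along the segment and every decomposition of an efficient payoff is forced to use an action in $\tilde A$ with continuations in $V$. Once this packaging step is carried out, the rest of the argument is a direct application of Theorem~1 and the observation that Condition~4 becomes tight exactly at $\mu=\overline\mu$, producing the threshold $\delta^*$.
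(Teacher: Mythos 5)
Your proposal is correct and follows essentially the same route as the paper: the forward inclusion comes from verifying Conditions 1--4 of Theorem~1 at $\mu=\overline\mu$ with $\delta^*=\underline{\delta}_{\overline\mu}$, and the reverse inclusion from the necessity of Condition~3 applied to the interval of efficient PPE payoffs. The convex-hull packaging step you flag as hardest is exactly where the paper is also loose (it simply asserts that $E(\delta)$ is closed and convex); the detour is in fact avoidable, because the Condition-3 necessity argument can be run directly on the self-generating set of efficient PPE payoffs (its $v_j$-minimizing point must be decomposed by $\besti$, forcing $\kappa_{ij}^+\leq 0$ and hence $v_j\geq\overline\mu_j$ pointwise), which gives $E(\delta)\subseteq V_{\overline\mu}$ without needing the hull to be self-generating.
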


\bigskip

The proof yields explicit (messy) expressions for $\overline{\mu}_1, \overline{\mu}_2$ and $\delta^*$.

\pagebreak

\section{Examples, Redux}\label{sec:example_redux}

In Section 3 we presented three examples to illustrate the model.  We now return to these models to illustrate our analysis and conclusions.

\bigskip

\noindent {\bf Example 1 }  Because there are only two players, Theorem 4 applies.  In this case it is easy to give explicit expressions for
$\overline{\mu}_1, \overline{\mu}_2$ and for the threshold discount factor $\delta^*$.\footnote{Calculations are available from the authors on
request.}  In fact $\overline{\mu}_1 = \overline{\mu}_12 = q/(q-r) b$ and
$$
\delta^* \ =  \ \frac{1}{1+\left(\frac{B-2\frac{q}{q-r}b}{B+2\frac{1-q}{q-r}b}\right)}
$$
so that for every $\delta \geq \delta^*$ the set of efficient payoffs that can be achieved in PPE is exactly
$$
E(\delta) = \{(v_1, v_2) : v_1 + v_2 = B; v_i \geq q/(q-r) b \}
$$
We stress that the set of efficient equilibrium outcomes {\em does not} increase as $\delta \to 1$; as we noted in the Introduction, patience is rewarded but only up to a point.  See Figure \ref{fig:PayoffRegion_PrisonersDilemma_PPE}.

\begin{figure}
\centering
\includegraphics[width =3.0in]{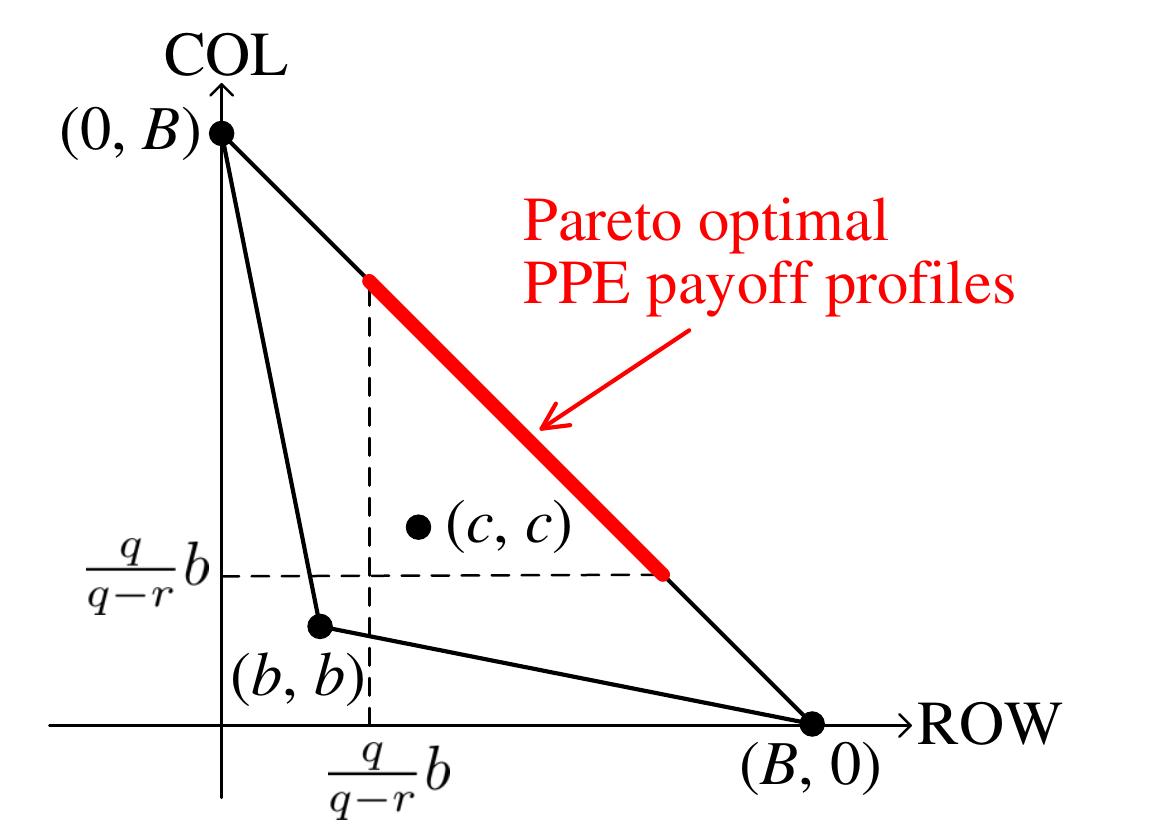}
\caption{Efficient PPE Payoffs for the Modified Prisoners' Dilemma. \label{fig:PayoffRegion_PrisonersDilemma_PPE}}
\end{figure}

\bigskip

\noindent {\bf Example 2 } As we have noted, the choice of a reporting rule has implications for the equilibria of the repeated game.  In this case it is natural to consider two reporting rules $(Y^1, \psi^1)$, $(Y^2, \psi^2)$ where
\begin{itemize}
\item $Y^1 = \{y_b, y_g\}$; $\psi^1(z_k) = y_b$ if $k = 0$,  $\psi^1(z_k) = y_g$ if $k \not= 0$
\item $Y^2  = Z$; $\psi^2(z_k) = z_k$ for all $k = 0, \ldots, n$
\end{itemize}
In the first rule (which is the one discussed in Section 3), the designer announces whether or not there has been a winner; in the second case the designer also announces the identity of the winner.

As we have noted earlier, in the reduced forms of the stage game, the {\em ex ante} expected utilities  are given by
\begin{eqnarray*}
U_i(\act) &=& a_i \left( \eta - \kappa \sum_{j \not= i} a_j \right)^+ R - ca_i \\
\end{eqnarray*}
and, with the first reporting rule,  the signal distribution is
$$
\overline{\rho}(y_* | a) \left\{ \begin{array}{rcl}
                            1 - \sum_i a_i \left(\eta - \kappa \sum_{j \not= i} a_j \right)^+
                                  &\mbox{ if } & * = b \\

                            \sum_i a_i \left(\eta - \kappa \sum_{j \not= i} a_j \right)^+ &\mbox{ if } & *= g
                            \end{array} \right.
$$
With the second reporting rule, the signal distribution is
$$
\widehat{\rho}(y_k | a) = \left\{ \begin{array}{rcl}
                            1 - \sum_i a_i \left(\eta - \kappa \sum_{j \not=i} a_j \right)^+
                                  &\mbox{ if } & k = 0 \\
                            a_k \left(\eta - \kappa \sum_{j \not=k} a_j \right)^+ &\mbox{ if } & k \not= 0
                            \end{array} \right.
$$

To be specific, suppose there are 2 players.  With the first reporting rule, there are two signals; with the second reporting rule there are three signals.  The second reporting rule provides additional information to players and this additional information can be used to support more PPE.  Suppose for instance that a strategy profile $\sigma$ calls for $\tilde{\bf a}^1$ to be played after a particular history.   If all the players follow $\sigma$ then only player $1$ exerts non-zero effort so only two outcomes can occur: either player $1$ wins or no one wins.  If player $2$ deviates by exerting non-zero effort, a third outcome can occur:  $2$ wins.  With either monitoring structure, it is possible for player 1 to detect (statistically) when player 2 has deviated, but with the second monitoring structure it will sometimes be the case that player 1 can be {\em certain} that player 2 has deviated.  This additional information makes it possible to provide additional punishments for deviation and hence to support a larger set of efficient PPE.  As Figure \ref{fig:repeatedcontest} shows, the difference matters:  the second reporting rule always supports a larger set of efficient PPE; indeed, for some values of $\kappa$ (which measures the strength of the interference) only the second reporting rule supports any efficient PPE at all.

\begin{figure}
\centering
\includegraphics[width =5.0in]{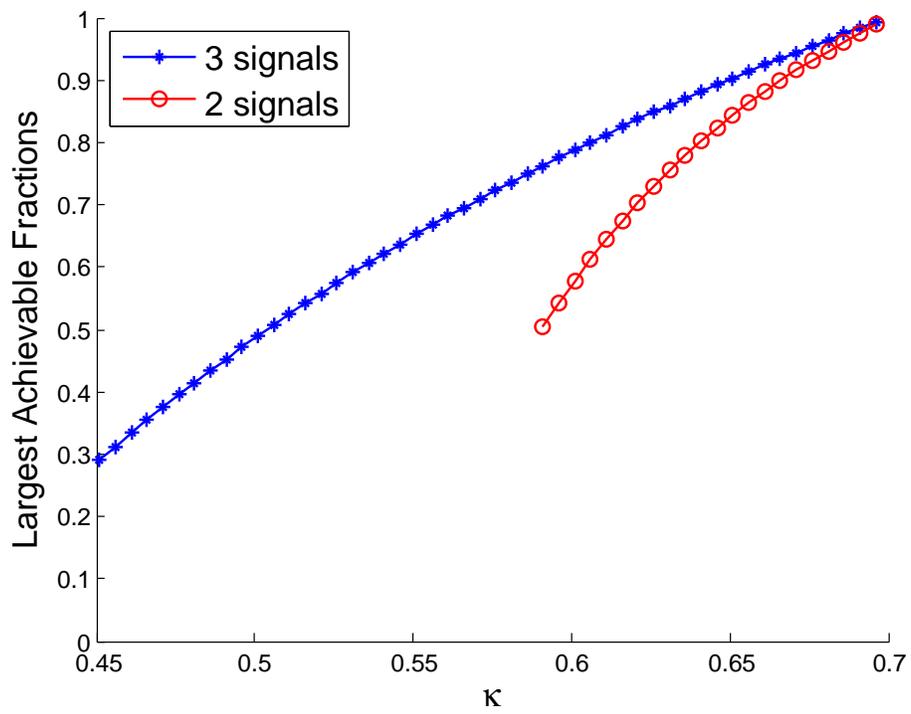}
\caption{Largest Achievable Fraction $1-\mu_i/\tilde{v}_i^i$ as a Function of $\kappa$. \label{fig:repeatedcontest}}
\end{figure}

More generally, consider a context with $n \geq 3$ players.  If the identity of the winner is not announced, all deviations must be ``punished'' in the same way, but if the identity of the winner is announced, ``punishments'' can be tailored to the deviator, and hence can be more severe. If punishments can be more severe it may be possible to sustain a wider range of PPE.  Which reporting rule -- hence which monitoring structure -- should be chosen by the designer will depend on the tradeoff the designer makes between preserving privacy and sustaining a wider range of PPE.

\bigskip

\noindent {\bf Example 3 }  As we have suggested, the designer must choose from some (possible) measurement technologies. This choice involves a tension: more accurate measurement technologies will typically be more costly to employ.  Hence the designer must trade-off the accuracy of the measurement technology against the cost of employing it.  The designer must also choose a reporting rule, in this case a threshold $d_0$. This choice also involves a tension, but of a different kind.  Given the distribution of shocks and a choice of measurement technology, the choice of threshold affects the distribution of signals.  How the designer chooses the distribution of signals depends on what the designer wishes to accomplish.  For instance, given a fixed discount factor, the designer may wish to choose the threshold to maximize the range of long-run resource allocations that can be supported as PPE for the given discount factor.  Alternatively, the designer may wish to minimize the discount factor for which {\em some} long-run resource allocation can be supported as a PPE.

To give some idea of the effect of these tradeoffs, we present numerical results for a special case of the Resource Sharing Game with 3 players, capacity $\chi = 1$ and $\bar{\varepsilon}=0.3$.  Because the game is symmetric it seems natural to consider symmetric sets of payoffs; so we consider sets of the form
$$
V(\eta) = \{v \in V : v_i \geq \eta\tilde{v} \mbox{ for each } i \}
$$
where $\tilde{v}$ is the utility of each player's most preferred action and $\eta \in [0,1]$.     Note that $1-\eta$ represents the fraction of the entire efficient set $V$ that is occupied by $V(\eta)$.  A natural {\em desideratum} for the designer is to choose the threshold $d_0$ so that the fraction $1-\eta$ is as large as possible; this maximizes opportunities for sharing.  (As we have shown in Theorem 1,  making $\eta$ smaller also makes the required discount factor smaller, so the designer can simultaneously create more sharing opportunities for less patient players.)       Figures \ref{fig:LowerBoundPayoff} and \ref{fig:LowerBoundDiscountFactor} display (from simulations) the relationship between the threshold $d_0$ and the smallest $\eta$ and smallest $\delta$ for different values of the exponent $p$.

\begin{figure}
\centering
\includegraphics[width =5.0in]{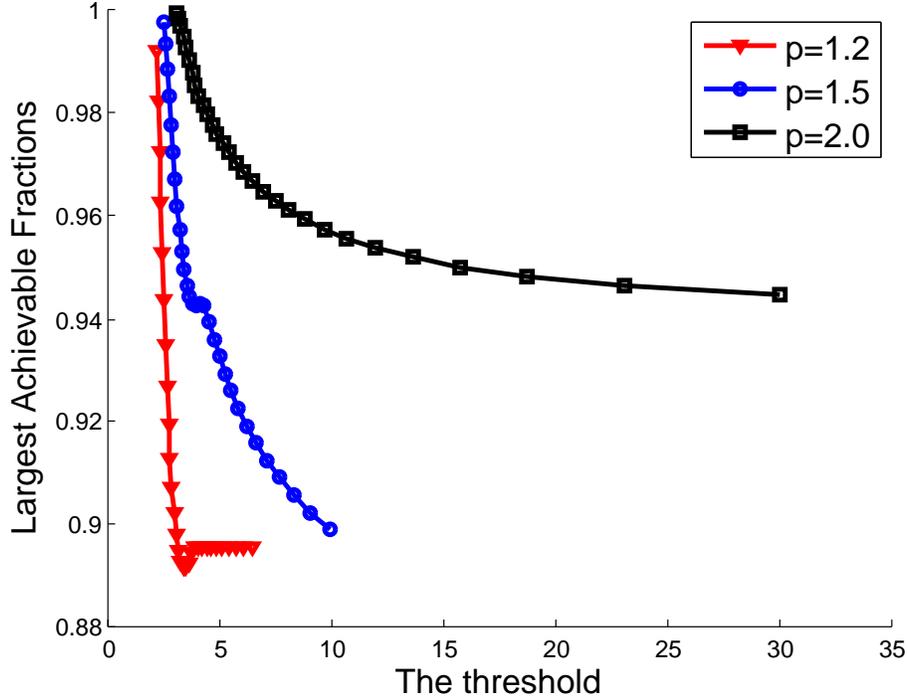}
\caption{Largest Achievable Fraction $1- \eta$ as a Function of Threshold $d_0$. \label{fig:LowerBoundPayoff}}
\end{figure}

\begin{figure}
\centering
\includegraphics[width =5.0in]{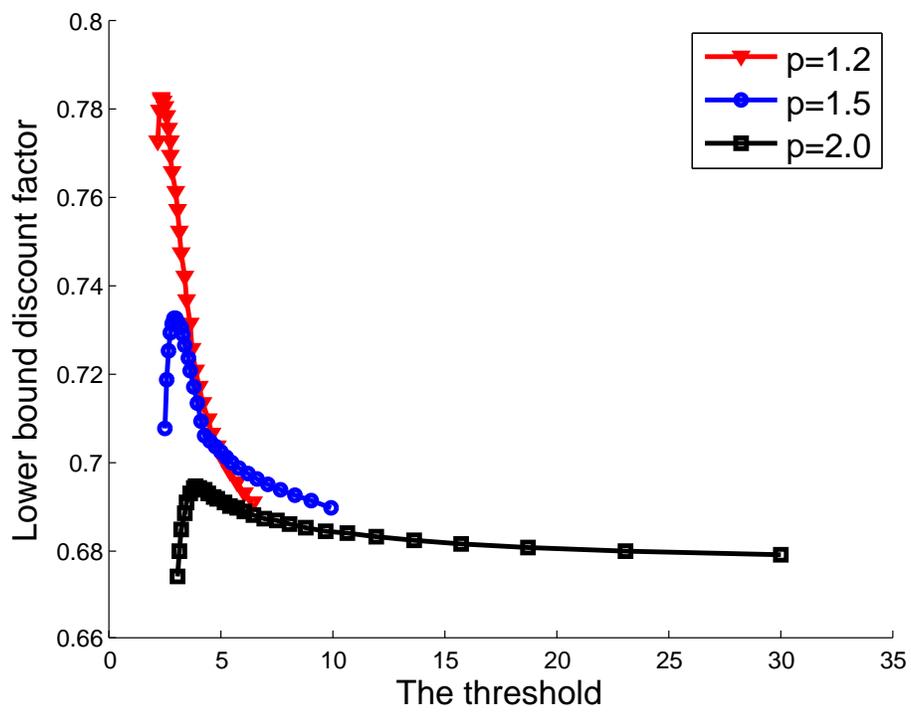}
\caption{Smallest Achievable Discount Factor $\delta$ as a Function of Threshold $d_0$ \label{fig:LowerBoundDiscountFactor}}
\end{figure}

\section{Conclusion}  This paper diverges from much of the familiar literature on repeated games with imperfect public monitoring in two directions.  In analyzing the reduced form, we make different assumptions on the signal structure and obtain stronger conclusions about efficient PPE (bounds on the discount factor, explicitly constructive strategies).  However, we also construct an elaborated form in which the information structure can be viewed as arising from the behavior of a strategic designer. Clearly there is much more to be done.  Perhaps most obviously, it is clearly important to understand the extent to which the assumptions on the signal structure of the reduced form and on the geometry of the candidate self-generating sets $V_\mu$ can be relaxed.  However, we think the elaborated form is of even more potential interest, especially for applications.  As we have discussed in the Examples, the designer must decide what to observe and what to communicate to the players, and these choices will typically involve a trade-off between the cost of more accurate observation and communication on the one hand and the benefits of better information on the other hand.  It seems natural to suppose that the costs and benefits -- and hence the trade-offs -- may be very different across environments.  This seems a subject worthy of much study.


%
%
%

\bibliographystyle{elsarticle-harv}
\bibliography{RepeatedGames}

\begin{thebibliography}{10}
\expandafter\ifx\csname natexlab\endcsname\relax\def\natexlab#1{#1}\fi
\expandafter\ifx\csname url\endcsname\relax
  \def\url#1{\texttt{#1}}\fi
\expandafter\ifx\csname urlprefix\endcsname\relax\def\urlprefix{URL }\fi

\bibitem[{Abreu et~al.(1990)Abreu, Pearce, and Stacchetti}]{APS1990}
Abreu, D., Pearce, D., Stacchetti, E., 1990. Toward a theory of discounted
  repeated games with imperfect monitoring. Econometrica 58~(5), 1041--1063.

\bibitem[{Athey and Bagwell(2001)}]{AtheyBagwell2001}
Athey, S., Bagwell, K., 2001. Optimal collusion with private information. RAND
  Journal of Economics 32~(3), 428--465.

\bibitem[{Bharath-Kumar and Jaffe(1981)}]{FlowControl}
Bharath-Kumar, K., Jaffe, J.~M., 1981. A new approach to performance-oriented
  flow control. IEEE Transactions on Communications 29~(4), 427--435.

\bibitem[{Blume and Zame(1994)}]{BlumeZame}
Blume, L.~E., Zame, W.~R., 1994. The algebraic geometry of perfect and
  sequential equilibrium. Econometrica 62~(4), 783--794.

\bibitem[{Bochnak et~al.(1998)Bochnak, Coste, and Roy}]{BCR}
Bochnak, J., Coste, M., Roy, M.-F., 1998. Real algebraic geometry. Springer.

\bibitem[{Fudenberg et~al.(1994)Fudenberg, Levine, and Maskin}]{FLM1994}
Fudenberg, D., Levine, D.~K., Maskin, E., 1994. The folk theorem with imperfect
  public information. Econometrica 62~(5), 997--1039.

\bibitem[{Fudenberg et~al.(2007)Fudenberg, Levine, and Takahashi}]{FLT2007}
Fudenberg, D., Levine, D.~K., Takahashi, S., 2007. Perfect public equilibrium
  when players are patient. Games and Economic Behavior 61~(1), 27 -- 49.

\bibitem[{Green and Porter(1984)}]{GreenPorter1984}
Green, E.~J., Porter, R.~H., 1984. Noncooperative collusion under imperfect
  price information. Econometrica 52~(1), 87--100.

\bibitem[{Mailath et~al.(2002)Mailath, Obara, and Sekiguchi}]{MOS2002}
Mailath, G., Obara, I., Sekiguchi, T., 2002. The maximum efficient equilibrium
  payoff in the repeated prisoners' dilemma. Games and Economic Behavior
  40~(1), 99--122.

\bibitem[{Mailath and Samuelson(2006)}]{MS2006}
Mailath, G., Samuelson, L., 2006. Repeated Games and Reputations: Long-run
  Relationships. Oxford University Press, Oxford, U.K.

\end{thebibliography}

\pagebreak

\section*{Appendix}

The proof of Proposition 1 is immediate and omitted.

\bigskip

\noindent {\bf Proof of Proposition 2 } Fix an active player $i$ and an inactive player $j$.  Set
\begin{eqnarray*}
A(i,j) &=& \big\{ a_j \in A_j : u_j(a_j, \besti_{-j}) > u_j(\besti) \big\} \\
B(i,j) &=& \big\{ a_j \in A_j : u_j(a_j, \besti_{-j}) <  u_j(\besti) , \rho(y^i_b | a_j, \besti_{-j}) < \rho(y^i_b |  \besti) \big\}
\end{eqnarray*}
If either of $A(i,j)$ or $B(i,j)$ is empty then $\alpha(i,j) \leq \beta(i,j)$ by default, so assume in what follows that neither of $A(i,j)$, $B(i,j)$ is empty.

Fix a discount factor $\delta \in (0,1)$ and let $\sigma$ be PPE that achieves an efficient payoff.  Assume that $i$ is active following some history:
$\sigma(h) = \besti$ for some $h$.   Because $\sigma$ achieves an efficient payoff, we can decompose the payoff $v$ following $h$ as the weighted sum of the current payoff from $\besti$ and the continuation payoff assuming that players follow $\sigma$; because $\sigma$ is a PPE, the incentive compatibility condition for all players $j$ must obtain.  Hence for all $a_j \in A_j$ we have
\begin{eqnarray}\label{eqn:Decomposition}
v_j &=& (1-\delta) u_j(\besti) + \delta  \sum_{y\in Y} \rho(y|\besti)  \gamma_j(y) \nonumber \\
    &\geq& (1-\delta)u_j(a_j,\besti_{-j}) + \delta  \sum_{y\in Y} \rho(y|a_j,\besti_{-j})  \gamma_j(y),
\end{eqnarray}
Substituting probabilities for the good and bad signals yields
\begin{eqnarray*}
v_j &=& (1-\delta) u_j(\besti) + \delta  \big[\rho(y_g^i|\besti)  \gamma_j(y_g^i)+\rho(y_b^i|\besti)  \gamma_j(y_b^i)\big] \\
    &\geq& (1-\delta) u_j(a_j,\besti_{-j}) + \delta  \big[ \rho(y_g^i|a_j,\besti_{-j})  \gamma_j(y_g^i)+\rho(y_b^i|a_j,\besti_{-j})  \gamma_j(y_b^i)\big]\nonumber
\end{eqnarray*}
Rearranging yields
\begin{eqnarray*}
\big[\rho(y_b^i|a_j,\besti_{-j})-\rho(y_b^i|\besti)\big]  \big[\gamma_j(y_g^i) - \gamma_j(y_b^i)\big] \big[\frac{\delta}{1- \delta}\big]
   \geq   \big[ u_j(a_j,\besti_{-j})- u_j(\besti) \big]
\end{eqnarray*}
Now suppose $j \not= i$ is an inactive player.
If $a_j \in A(i,j)$ then $\rho(y^i_b | a_j, \besti_{-j}) - \rho(y^i_b |  \besti) > 0$ (by Assumption 4) so
\begin{eqnarray}\label{eqn:A(i,j)}
  \big[\gamma_j(y_g^i) - \gamma_j(y_b^i)\big] \big[\frac{\delta}{1- \delta}\big]
   \geq   \frac{ u_j(a_j,\besti_{-j})- u_j(\besti) }{\rho(y_b^i|a_j,\besti_{-j})-\rho(y_b^i|\besti)}
\end{eqnarray}
If $a_j \in B(i,j)$ then $\rho(y^i_b | a_j, \besti_{-j}) - \rho(y^i_b |  \besti) < 0$ (by definition) so
\begin{eqnarray}\label{eqn:B(i,j)}
  \big[\gamma_j(y_g^i) - \gamma_j(y_b^i)\big] \big[\frac{\delta}{1- \delta}\big]
   \leq   \frac{ u_j(a_j,\besti_{-j})- u_j(\besti) }{\rho(y_b^i|a_j,\besti_{-j})-\rho(y_b^i|\besti)}
\end{eqnarray}
Taking the sup over $a_j \in A(i,j)$ in (\ref{eqn:A(i,j)}) and the inf over $a_j \in B(i,j)$ in (\ref{eqn:B(i,j)}) yields $\alpha(i,j) \leq \beta(i,j)$ as desired.
$\Box$

\bigskip

\noindent {\bf Proof of Proposition 3 }  As above, we assume $i$ is active following the history $h$ and that $v$ is the payoff following $h$.  Fix $a_i  \in A_i$.  By definition, $u_i(\besti) > u_i(a_i, \besti_{-i})$.  With respect to probabilities,  there are two possibilities.  If $\rho(y^i_b | a_i, \besti_{-i}) \leq \rho(y^i_b | \besti)$ then we immediately have
$$
\tilde{v}_i^i-u_i(a_i,\besti_{-i}) \geq \frac{1}{\lambda_i} \sum_{j\neq i} \lambda_j
\alpha(i,j)[\rho(y_b^i|a_i,\besti_{-i})-\rho(y_b^i|\besti)]
$$
because the left-hand side is positive and the right-hand side is non-negative.  If  $\rho(y^i_b | a_i, \besti_{-i}) > \rho(y^i_b | \besti)$ we proceed as follows.

 We begin with  \eqref{eqn:Decomposition} but now we apply it to the active user $i$, so that for all $a_i\in A_i$ we have
\begin{eqnarray*}
v_i &=& (1-\delta) u_i(\besti) + \delta  \big[ \rho(y_g^i|\besti)  \gamma_i(y_g^i)+\rho(y_b^i|\besti)  \gamma_i(y_b^i) \big] \\
    &\geq& (1-\delta) u_i(a_i,\besti_{-i}) + \delta  \big[ (\rho(y_g^i|a_i,\besti_{-i})  \gamma_i(y_g^i)+\rho(y_b^i|a_i,\besti_{-i})  \gamma_i(y_b^i) \big] \nonumber
\end{eqnarray*}
Rearranging yields
$$
  \gamma_i(y_g^i) - \gamma_i(y_b^i)\
    \geq \left[ \frac{1-\delta}{\delta} \right] \left[ \frac{u_i(a_i,\besti_{-i}) - u_i(\besti) }{ \rho(y_b^i|a_i,\besti_{-i})-\rho(y_b^i|\besti)} \right]
$$
Because continuation payoffs are in $V$, which lies in the hyperplane $H$, the continuation payoffs for the active user can be expressed in terms of the continuation payoffs for the inactive users as
$$
\gamma_i(y)=\frac{1}{\lambda_i}\left[1-\sum_{j\neq i} \lambda_j \gamma_j(y)\right]
$$
 Hence
$$
\gamma_i(y_g^i) - \gamma_i(y_b^i) = -\frac{1}{\lambda_i} \sum_{j\neq i} \lambda_j [\gamma_j(y_g^i) - \gamma_j(y_b^i)]
$$
Applying the incentive compatibility constraints for the  inactive users implies that for each $a_j \in A(i,j)$ we have
$$
\gamma_j(y_g^i) - \gamma_j(y_b^i) \geq \left[ \frac{1-\delta}{\delta} \right]
\left[\frac{u_j(a_j,\besti_{-j})-u_j(\besti)}{\rho(y_b^i|a_j,\besti_{-j})-\rho(y_b^i|\besti)} \right]
$$
In particular
$$
\gamma_j(y_g^i) - \gamma_j(y_b^i) \geq \left[ \frac{1-\delta}{\delta} \right] \alpha(i,j)
$$
and hence
$$
\gamma_i(y_g^i) - \gamma_i(y_b^i) \leq -\frac{1}{\lambda_i} \left[\frac{1-\delta}{\delta}\right] \left[ \sum_{j\neq i} \lambda_j  \alpha(i,j)\right] \leq 0
$$
Putting these all together, canceling the factor $[1-\delta]/\delta$ and remembering that we are in the case  $\rho(y^i_b | a_i, \besti_{-i}) > \rho(y^i_b | \besti)$ yields
$$
\tilde{v}_i^i-u_i(a_i,\besti_{-i}) \geq \frac{1}{\lambda_i} \sum_{j\neq i} \lambda_j
\alpha(i,j)[\rho(y_b^i|a_i,\besti_{-i})-\rho(y_b^i|\besti)]
$$
which is the desired result.  $\Box$

\bigskip

\noindent {\bf Proof of Theorem 1 }  Assume that $V_\mu$ is regular and not an extreme point, and is a self-generating set; we verify Conditions 1-4 in turn.  Because $V_\mu$ is self-generating and not an extreme point, it cannot be a singleton and hence must contain an interior point of $V$.  In order for such a point to be achieved in a PPE, every player must be active following some history, so Propositions 2 and 3 yield Conditions 1 and 2.

By assumption, for each $i \in N$ there is a payoff profile $\hat{v}^i \in V_\mu$ with the property that $\hat{v}^i_j = \mu_j$ for each $j \not= i$.  Necessarily, $\hat{v}^i$ is the unique such point and $\hat{v}^i = \argmax \{v_i : v \in V_\mu \}$.  Because $V$ lies in the hyperplane $H$ we have
$$
\hat{v}_j^i = \left\{\begin{array}{ccl} \mu_j & \mbox{ if } & j \not= i \\ \frac{1}{\lambda_i}\left(1-\sum_{k\neq i} \lambda_k \mu_k\right)& \mbox{ if } & j=i
\end{array}\right. \nonumber
$$

Because $V_\mu$ is self-generating, we can decompose $\hat{v}^i$:
\begin{equation}\label{eqn:decompose}
\hat{v}^i = (1-\delta)u(\tilde{\act}^k) + \delta \sum_y \rho(y| \tilde{\act}^k) \gamma(y)
\end{equation}
for some $\tilde{\act}^k$.  If $k \not= i$ then (because $V_\mu \not= \{\tilde{v}^k \}$) we must have $\mu_k < \tilde{v}^k_k$ which implies that $\gamma_k(y) < \mu_k$ for some $y$; since continuation payoffs must lie in $V_\mu$ this is a contradiction.  Hence in the decomposition (\ref{eqn:decompose}) we must have $\tilde{\act}^k = \tilde{\act}^i$.

It is convenient to first establish the following inequality on $\mu_j$ on the way to establishing the  bounds in Condition 3.
$$
\mu_j > \max_{i\neq j} \tilde{v}_j^i \mbox{ for all } j\in N
$$
To see this, suppose to the contrary that there exists a $i,j$ such that $\mu_j \leq \tilde{v}_j^i$. Consider $i$'s preferred payoff profile $\hat{v}^i$ in $V_{\mu}$. Because decomposing $\hat{v}^i$ requires that we use $\besti$, it follows that
$$
\mu_j = (1-\delta)\cdot \tilde{v}_j^i + \delta\cdot \sum_{y} \rho(y|\besti) \gamma_j(y)
$$
If $\mu_j < \tilde{v}_j^i$ then  $\sum_{y\in Y} \rho(y|\besti)\gamma_i(y) < \mu_j$ and so  $\gamma_j(y)<\mu_j$ for some $y$. This contradicts that fact that $\gamma(y)\in V_{\mu}$.  If $\mu_j = \tilde{v}_j^i$, we must have $\sum_{y} \rho(y|\besti) \gamma_j(y) = \mu_j$. Since $\gamma_j(y)\geq \mu_j$ for all $y$, we must
have $\gamma_j(y_g^i)=\gamma_j(y_b^i)=\mu_j$. By assumption,  player $j$ has a currently profitable deviation $a_j$ so that
$u_j(a_j,\besti_{-j})>u_j(\besti)$, which implies that the continuation payoff $\gamma_j(y_g^i)=\gamma_j(y_b^i)=\mu_j$ cannot satisfy the incentive compatibility  constraints. Hence, we must have $\mu_j > \tilde{v}_j^i$ as asserted.

With all this in hand we derive Condition 3.  To do this, we suppose $i$ is active and examine the decomposition of the inactive player $j$'s payoff in greater detail.  Because $\mu_j > \tilde{v}^i_j$ and $v_j \geq \mu_j$ for every $v \in V_\mu$ we certainly have $v_j > \tilde{v}^i_j$.  We can write $j$'s incentive compatibility condition as
\begin{eqnarray}\label{eqn:IC_inactive}
v_j &=& (1-\delta)\cdot \tilde{v}_j^i + \delta \cdot \sum_{y\in Y} \rho(y|\besti) \cdot \gamma_j(y) \\
    &\geq& (1-\delta)\cdot u_j(a_j,\besti_{-j}) + \delta \cdot \sum_{y\in Y} \rho(y|a_j,\besti_{-j}) \cdot \gamma_j(y). \nonumber
\end{eqnarray}
From the equality constraint in \eqref{eqn:IC_inactive}, we can solve for  the discount factor $\delta$ as
$$
\delta = \frac{v_j-\tilde{v}_j^i}{\sum_{y\in Y} \gamma_j(y)\rho(y|\besti) - \tilde{v}_j^i}
$$
(Note that the denominator can never be zero and the above equation is well defined, because $v_j>\tilde{v}_j^i$ implies that $\sum_{y\in Y}
\gamma_j(y)\rho(y|\besti) > \tilde{v}_j^i$.) We can then eliminate the discount factor $\delta$ in the inequality
of \eqref{eqn:IC_inactive}. Since $v_j>\tilde{v}_j^i$, we can obtain equivalent inequalities, depending on whether $a_j$ is a profitable or unprofitable current deviation):
\begin{itemize}
\item If $ u_j(a_j,\bm{\tilde{a}}_{-j}^i)>\tilde{v}_j^i $ then
\begin{eqnarray}
 v_j &\leq& \sum_{y\in Y}
\gamma_j(y)\Big[\left(1-\frac{v_j-\tilde{v}_j^i}{u_j(a_j,\besti_{-j})-\tilde{v}_j^i}\right)\rho(y|\besti) \nonumber \\
&&  \hspace{.8in} + \ \frac{v_j-\tilde{v}_j^i}{u_j(a_j,\besti_{-j})-\tilde{v}_j^i}\rho(y|a_j,\besti_{-j}) \Big] \label{eqn:ICactive-a}
\end{eqnarray}
\item If $ u_j(a_j,\bm{\tilde{a}}_{-j}^i)< \tilde{v}_j^i $ then
\begin{eqnarray}
 v_j &\geq& \sum_{y\in Y}
\gamma_j(y)\Big[\left(1-\frac{v_j-\tilde{v}_j^i}{u_j(a_j,\besti_{-j})-\tilde{v}_j^i}\right)\rho(y|\besti) \nonumber \\
&& \hspace{.8in} + \ \frac{v_j-\tilde{v}_j^i}{u_j(a_j,\besti_{-j})-\tilde{v}_j^i}\rho(y|a_j,\besti_{-j}) \Big] \label{eqn:ICactive-b}
\end{eqnarray}
\end{itemize}

For notational convenience,  write the coefficient of $\gamma_j(y_g^i)$ in the above inequalities as
\begin{eqnarray*}
c_{ij}(a_j,\bm{\tilde{a}}_{-j}^i)
&\triangleq&  \left(1-\frac{v_j-\tilde{v}_j^i}{u_j(a_j,\besti_{-j})-\tilde{v}_j^i}\right)\rho(y_g^i|\besti) \nonumber  \\
&& \hspace{.5in} +\ \left(\frac{v_j-\tilde{v}_j^i}{u_j(a_j,\besti_{-j})-\tilde{v}_j^i}\right)\rho(y_g^i|a_j,\besti_{-j}) \nonumber \\
&=& \rho(y_g^i|\besti)+(v_j-\tilde{v}_j^i) \left(\frac{\rho(y_g^i|a_j,\besti_{-j})-\rho(y_g^i|\besti)}{u_j(a_j,\besti_{-j})-\tilde{v}_j^i} \right) \nonumber \\
&=& \rho(y_g^i|\besti)-(v_j-\tilde{v}_j^i) \left( \frac{\rho(y_b^i|a_j,\besti_{-j})-\rho(y_b^i|\besti)}{u_j(a_j,\besti_{-j})-\tilde{v}_j^i} \right)
\end{eqnarray*}
According to (\ref{eqn:ICactive-a}), if $u_j(a_j,\bm{\tilde{a}}_{-j}^i)>\tilde{v}_j^i$ then
\begin{eqnarray}\label{eqn:IC_active_cij}
c_{ij}(a_j,\besti_{-j}) \cdot \gamma_j(y_g^i) +\big[1-c_{ij}(a_j,\besti_{-j}) \big]  \gamma_j(y_b^i) \leq v_j
\end{eqnarray}
Since $\gamma_j(y_g^i)>\gamma_j(y_b^i)$, this is true if and only if
\begin{eqnarray}\label{eqn:IC_active_cij+}
\kappa_{ij}^+ \cdot \gamma_j(y_g^i) + (1-\kappa_{ij}^+) \cdot \gamma_j(y_b^i) \leq v_j,
\end{eqnarray}
where $\kappa_{ij}^+  \triangleq \sup \{ c_{ij}(a_j,\bm{\tilde{a}}_{-j}^i) : a_j\in A_j: u_j(a_j,\bm{\tilde{a}}_{-j}^i)>\tilde{v}_j^i \}$.  (Fulfilling the inequalities \eqref{eqn:IC_active_cij} for all $a_j$ such that $u_j(a_j,\besti_{-j})>u_j(\besti)$ is equivalent
to fulfilling the single inequality \eqref{eqn:IC_active_cij+}.  If \eqref{eqn:IC_active_cij+} is satisfied, then the inequalities \eqref{eqn:IC_active_cij} are  satisfied for all $a_j$ such that $u_j(a_j,\besti_{-j})>u_j(\besti)$ because
$\gamma_j(y_g^i)>\gamma_j(y_b^i)$ and $\kappa_{ij}^+\geq c_{ij}(a_j,\besti_{-j})$ for all $a_j$ such that $u_j(a_j,\besti_{-j})>u_j(\besti)$. Conversely,  if the inequalities \eqref{eqn:IC_active_cij} are satisfied for all $a_j$ such that $u_j(a_j,\besti_{-j})>u_j(\besti) $ and  \eqref{eqn:IC_active_cij+} were violated, so that
$\kappa_{ij}^+ \cdot \gamma_j(y_g^i) + (1-\kappa_{ij}^+) \cdot \gamma_j(y_b^i) > v_j$, then we can find a $\kappa_{ij}^\prime<\kappa_{ij}^+$ such
that $\kappa_{ij}^\prime \cdot \gamma_j(y_g^i) + (1-\kappa_{ij}^\prime) \cdot \gamma_j(y_b^i) > v_j$. Based on the definition of the supremum, there
exists at least a $a_j^\prime$ such that $u_j(a_j^\prime,\besti_{-j})>u_j(\besti)$ and $c_{ij}(a_j^\prime,\besti_{-j})>c_{ij}^\prime$, which means
that $c_{ij}(a_j^\prime,\besti_{-j}) \cdot \gamma_j(y_g^i) + (1-c_{ij}(a_j^\prime,\besti_{-j})) \cdot \gamma_j(y_b^i) > v_j$. This contradicts the
fact that the inequalities \eqref{eqn:IC_active_cij+} are fulfilled for all $a_j$ such that $u_j(a_j,\besti_{-j})>u_j(\besti)$.)

Similarly, according to (\ref{eqn:ICactive-b}), for all $a_j$ such that $u_j(a_j,\bm{\tilde{a}}_{-j}^i)<\tilde{v}_j^i$, we must have
$$
c_{ij}(a_j,\bm{\tilde{a}}_{-j}^i)  \gamma_j(y_g^i) + [1-c_{ij}(a_j,\bm{\tilde{a}}_{-j}^i)]  \gamma_j(y_b^i) \geq v_j.
$$
Since $\gamma_j(y_g^i)>\gamma_j(y_b^i)$, the above requirement is fulfilled if and only if
$$
\kappa_{ij}^- \cdot \gamma_j(y_g^i) + (1-\kappa_{ij}^-) \cdot \gamma_j(y_b^i) \geq v_j,
$$
where $\kappa_{ij}^- \triangleq \inf \big\{c_{ij}(a_j,\bm{\tilde{a}}_{-j}^i) : a_j\in A_j,  u_j(a_j,\bm{\tilde{a}}_{-j}^i)<\tilde{v}_j^i \big\}$.
Hence, the decomposition \eqref{eqn:IC_inactive} for user $j\neq i$ can be simplified as:
\begin{eqnarray}\label{eqn:IC_inactive_simplified}
\rho(y_g^i|\besti) \cdot \gamma_j(y_g^i) + [1-\rho(y_g^i|\besti)]  \gamma_j(y_b^i) &=&  \tilde{v}_j^i  + \frac{v_j-\tilde{v}_j^i}{\delta} \nonumber \\
\kappa_{ij}^+ \, \gamma_j(y_g^i) + (1-\kappa_{ij}^+) \cdot \gamma_j(y_b^i) &\leq& v_j \nonumber \\
\kappa_{ij}^-  \, \gamma_j(y_g^i) + (1-\kappa_{ij}^-) \cdot \gamma_j(y_b^i) &\geq& v_j
\end{eqnarray}

Keep in mind that the various continuation values $\gamma$ and the expressions $\kappa_{ij}^+, \kappa_{ij}^-$ depend on $v_j$; where necessary we write the dependence explicitly.  Note that there could be many $\gamma_j(y_g^i)$ and $\gamma_j(y_b^i)$ that satisfy \eqref{eqn:IC_inactive_simplified}. For a given discount factor $\delta$, we
call all the continuation payoffs that satisfy \eqref{eqn:IC_inactive_simplified} {\em feasible} -- but whether particular continuation values lie in $V_\mu$ depends on the discount factor.

We assert that $\kappa_{ij}^+(\mu_j)\leq0$ for all $i\in N$ and for all $j\neq i$.  To see this,  we look again at player $i$'s preferred payoff profile
$\hat{v}^i$ in $V_{\mu}$, which is necessarily decomposed by $\besti$.   We look at the following constraint for player $j\neq i$ in
\eqref{eqn:IC_inactive_simplified}:
$$
\kappa_{ij}^+ \  \gamma_j(y_g^i) + (1-\kappa_{ij}^+)  \ \gamma_j(y_b^i) \leq \mu_j.
$$
Suppose that $\kappa_{ij}^+(\mu_j)>0$. Since player $j$ has a currently profitable deviation from $\besti$, we must set $\gamma_j(y_g^i)>
\gamma_j(y_b^i)$. Then to satisfy the above inequality, we must have $\gamma_j(y_b^i)<\mu_j$. In other words, when $\kappa_{ij}^+(\mu_j)>0$, all the
feasible continuation payoffs of player $j$ must be outside $V_{\mu}$. This contradicts the fact that $V_{\mu}$ is self-generating so the assertion follows.

The definition of  $\kappa_{ij}^+(\mu_j)$ and the fact that $\kappa_{ij}^+(\mu_j) \leq 0$ entail that
\begin{eqnarray*}
\kappa_{ij}^+(\mu_j) &=& \rho(y_g^i|\besti) - (\mu_j-\tilde{v}_j^i) \inf_{a_j \in A(i,j)} \left[ \frac{\rho(y_b^i|a_j,\besti_{-j}) - \rho(y_b^i|\besti)}{u_j(a_j,\besti_{-j})-\tilde{v}_j^i}  \right] \\
&=& \rho(y_g^i|\besti)-(\mu_j-\tilde{v}_j^i) \left[\frac{1}{\sup_{a_j\in A(i,j)}
\left(\frac{u_j(a_j,\besti_{-j})-\tilde{v}_j^i}{\rho(y_b^i|a_j,\besti_{-j})-\rho(y_b^i|\besti)} \right)} \right] \nonumber \\
&=& \rho(y_g^i|\besti)-(\mu_j-\tilde{v}_j^i) \left[ \frac{1}{\alpha(i,j)} \right] \nonumber \\
&\leq& 0 \nonumber
\end{eqnarray*}
This provides a lower bound on $\mu_j$:
$$
\mu_j \geq \tilde{v}_j^i+\alpha(i,j) \rho(y_g^i|\besti) = \tilde{v}_j^i+\alpha(i,j)[1-\rho(y_b^i|\besti)]
$$
This bound must hold for every $i\in N$ and every $ j\not=i$. Hence, we have
$$
\mu_j \geq \max_{i\neq j} \left( \tilde{v}_j^i + \alpha(i,j) [1-\rho(y_b^i|\besti)] \right)
$$
which is Condition~3.

Now we derive Condition 4 (the necessary condition on the discount factor). The minimum discount factor $\underline{\delta}_{\mu}$ required
for $V_{\mu}$ to be a self-generating set solves the optimization problem
$$
\underline{\delta}_{\mu}
 = \max_{v\in V_{\mu}} \delta   ~~~~\mathrm{subject~to}~v \in \mathscr{B}(V_{\mu};\delta)
$$
where $\mathscr{B}(V_{\mu};\delta)$ is the set of payoff profiles that can be decomposed on $V_{\mu}$ under discount factor $\delta$. Since
$\mathscr{B}(V_{\mu};\delta)=\cup_{i\in\mathcal{N}} \mathscr{B} (V_{\mu};\delta,\besti)$, the above optimization problem can be reformulated as
\begin{eqnarray}\label{eqn:delta_optimization}
\underline{\delta}_{\mu} =\max_{v\in V_{\mu}} \min_{i\in N} \delta  ~~~~~\mathrm{subject~to}~v \in \mathscr{B}(V_{\mu};\delta,\besti).
\end{eqnarray}
To solve the optimization problem \eqref{eqn:delta_optimization}, we explicitly express the constraint $v\in\mathscr{B}(V_{\mu};\delta,\besti)$ using
the results derived above.

\begin{figure} \centering
\includegraphics[width =5.0in]{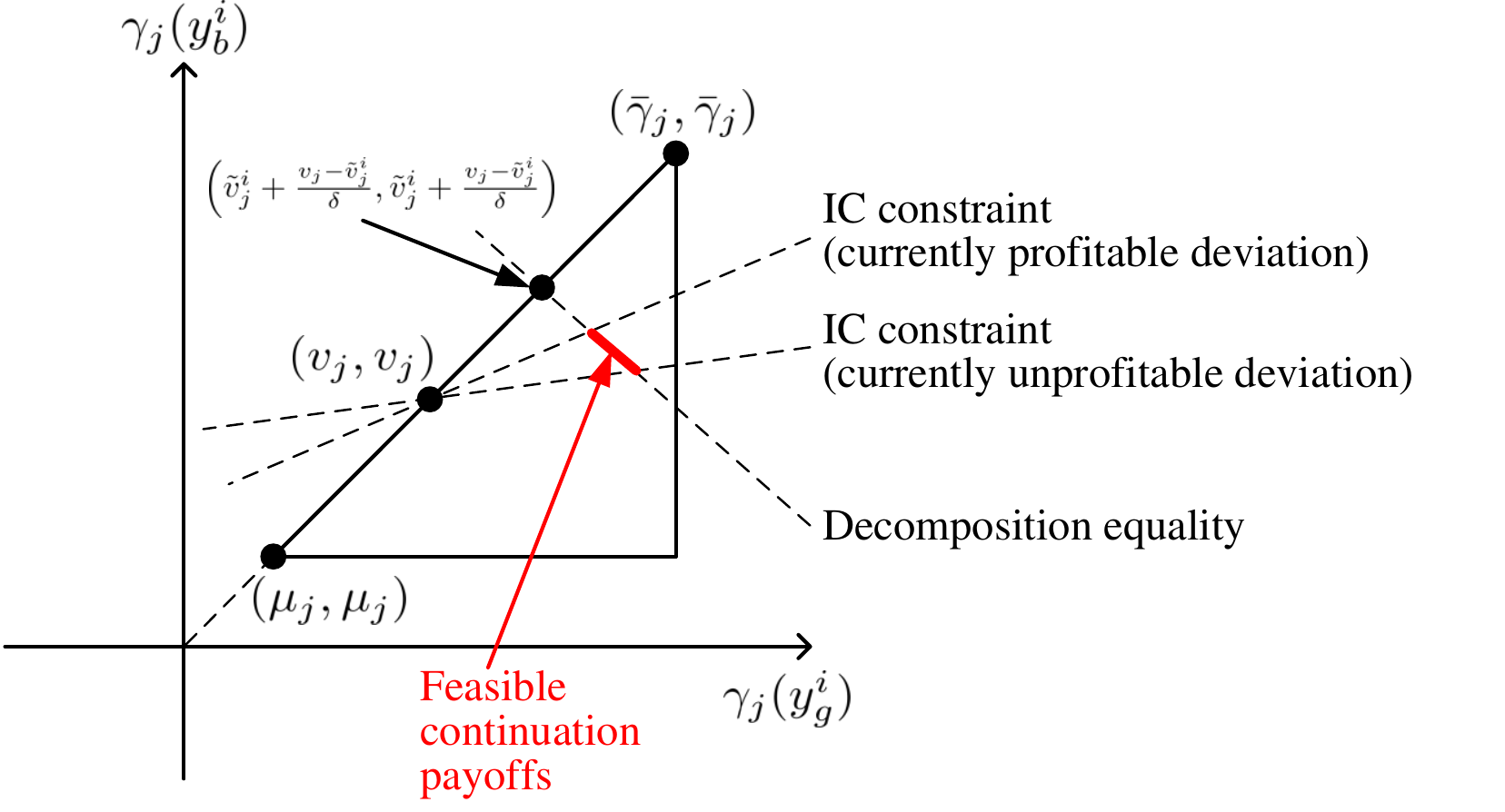}
\caption{Illustrations of the feasible continuation payoffs when $\kappa_{ij}^+\leq0$. $\bar{\gamma}_j=\frac{1}{\lambda_j}\left(1-\sum_{k\neq j}
\lambda_k \mu_k\right)$.} \label{fig:Illustration_ContinuationPayoffs}
\end{figure}

Some intuition may be useful.  Suppose that $i$ is active and $j$ is an inactive player. Recall that player $j$'s feasible $\gamma_j(y_g^i)$ and $\gamma_j(y_b^i)$ must satisfy  (\ref{eqn:IC_inactive_simplified}).
There are many $\gamma_j(y_g^i)$ and $\gamma_j(y_b^i)$ that satisfy  (\ref{eqn:IC_inactive_simplified}). In
Fig.~\ref{fig:Illustration_ContinuationPayoffs}, we show the feasible continuation payoffs that satisfy  (\ref{eqn:IC_inactive_simplified}) when
$\kappa_{ij}^+(v_j)\leq0$. We can see that all the continuation payoffs on the heavy line segment are feasible. The line segment is on the line that
represents the decomposition equality $\rho(y_g^i|\besti) \cdot \gamma_j(y_g^i) + (1-\rho(y_g^i|\besti)) \cdot \gamma_j(y_b^i) = \tilde{v}_j^i +
\frac{v_j-\tilde{v}_j^i}{\delta}$, and is bounded by the IC constraint on currently profitable deviations
$\kappa_{ij}^+ \cdot \gamma_j(y_g^i) + (1-\kappa_{ij}^+) \cdot \gamma_j(y_b^i) \leq v_j$ and the IC constraint on currently unprofitable deviations
$\kappa_{ij}^- \cdot \gamma_j(y_g^i) + (1-\kappa_{ij}^-) \cdot \gamma_j(y_b^i) \geq v_j$.  Among all the feasible continuation payoffs, denoted $\gamma^\prime(y)$, we choose the one, denoted $\gamma^*(y)$, such that  for all $j\not=i$,
$\gamma_j^*(y_g^i)$ and $\gamma_j^*(y_b^i)$ make the IC constraint on currently profitable deviations in  (\ref{eqn:IC_inactive_simplified})
binding. This is because under the same discount factor $\delta$, if there is any feasible continuation payoff $\gamma^\prime(y)$ in the
self-generating set, the one that makes the IC constraint on currently profitable deviations binding is also in the self-generating set. The reason
is that, as can be seen from Fig.~\ref{fig:Illustration_ContinuationPayoffs}, the continuation payoff $\gamma_j^*(y)$ that makes the IC constraint
binding has the smallest $\gamma_j^*(y_g^i)=\min \gamma_j^\prime(y_g^i)$ and the largest $\gamma_j^*(y_b^i)=\max \gamma_j^\prime(y_b^i)$. Formally we establish the following Lemma.

\begin{lemma}
Fix a payoff profile $v$ and a discount factor $\delta$. Suppose that $v$ is decomposed by $\besti$. If there are any feasible continuation payoffs
$\gamma^\prime(y_g^i)\in V_{\mu}$ and $\gamma^\prime(y_b^i)\in V_{\mu}$ that satisfy  (\ref{eqn:IC_inactive_simplified}) for all $j\not=i$, there there
exist feasible continuation payoffs $\gamma^*(y_g^i)\in V_{\mu}$ and $\gamma^*(y_b^i)\in V_{\mu}$ such that the IC constraint on currently
profitable deviations in  (\ref{eqn:IC_inactive_simplified}) is binding for all $j\not=i$.
\end{lemma}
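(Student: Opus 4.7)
The plan is to construct $\gamma^*$ explicitly by solving a decoupled $2\times 2$ linear system for each inactive player, and then use the assumed existence of some feasible $\gamma'$ to verify that $\gamma^*$ lies in $V_\mu\times V_\mu$. For each $j\neq i$, let $(\gamma_j^*(y_g^i),\gamma_j^*(y_b^i))$ be the unique solution of
\begin{align*}
\rho(y_g^i|\besti)\,\gamma_j^*(y_g^i) + \rho(y_b^i|\besti)\,\gamma_j^*(y_b^i) &= \tilde{v}_j^i + \frac{v_j-\tilde{v}_j^i}{\delta},\\
\kappa_{ij}^+\,\gamma_j^*(y_g^i) + (1-\kappa_{ij}^+)\,\gamma_j^*(y_b^i) &= v_j,
\end{align*}
which is uniquely solvable because the determinant $\kappa_{ij}^+-\rho(y_g^i|\besti)$ is strictly negative: we already know $\kappa_{ij}^+\leq 0$ from the proof of Theorem~1, while $\rho(y_g^i|\besti)>0$ by full support. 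Then set $\gamma_i^*(y)=\tfrac{1}{\lambda_i}[1-\sum_{k\neq i}\lambda_k\gamma_k^*(y)]$ so that each $\gamma^*(y)$ lies in the hyperplane $H$.

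Next I would compare $\gamma^*$ with the assumed feasible $\gamma'$. Because both satisfy the decomposition equality, for each $j\neq i$ the two points $(\gamma_j'(y_g^i),\gamma_j'(y_b^i))$ and $(\gamma_j^*(y_g^i),\gamma_j^*(y_b^i))$ lie on a common line. Since $\gamma'$ satisfies the profitable-IC as a (possibly weak) inequality while $\gamma^*$ binds it, and $\kappa_{ij}^+\leq 0$, the passage from $\gamma'$ to $\gamma^*$ decreases $\gamma_j(y_g^i)$ and increases $\gamma_j(y_b^i)$ for every $j\neq i$; the hyperplane normalization then reverses the signs for player $i$. This immediately yields half of the required membership: $\gamma_j^*(y_b^i)\geq\gamma_j'(y_b^i)\geq\mu_j$ for each $j\neq i$, and $\gamma_i^*(y_g^i)\geq\gamma_i'(y_g^i)\geq\mu_i$.

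The main obstacle is the other half, namely $\gamma_j^*(y_g^i)\geq\mu_j$ for $j\neq i$ and $\gamma_i^*(y_b^i)\geq\mu_i$, since in those coordinates the motion from $\gamma'$ to $\gamma^*$ runs in the ``wrong'' direction. My plan is a convex-analytic argument. The set $F$ of pairs $(\gamma(y_g^i),\gamma(y_b^i))\in V_\mu\times V_\mu$ that satisfy the decomposition equality together with all incentive-compatibility constraints is convex, compact, and non-empty (it contains $\gamma'$). Consider the linear functional $\Phi(\gamma)=\sum_{j\neq i}\lambda_j[\gamma_j(y_b^i)-\gamma_j(y_g^i)]$; its maximum over $F$ is attained at some $\gamma^\dagger\in F$. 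I expect to show that at $\gamma^\dagger$ the profitable-IC must bind for every $j\neq i$, for otherwise a small perturbation along the decomposition line---decreasing $\gamma_j(y_g^i)$ and increasing $\gamma_j(y_b^i)$ for the slack $j$'s, with $\gamma_i$ adjusted to stay on $H$---remains in $F$ and strictly raises $\Phi$, a contradiction. The only constraint that could block such a perturbation is a $\mu_j$ floor on $\gamma_j(y_g^i)$ (or, symmetrically, the $\mu_i$ floor on $\gamma_i(y_b^i)$); ruling this out is precisely where Condition~3 of Theorem~1 enters, since it guarantees that the profitable-IC becomes the effective binding constraint along the perturbation direction before the $\mu$ floors do. Uniqueness of the $2\times 2$ solution above then identifies $\gamma^\dagger$ with $\gamma^*$, giving $\gamma^*\in F\subseteq V_\mu\times V_\mu$, as required.
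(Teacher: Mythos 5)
Your construction of $\gamma^*$ (the unique solution of the decomposition equality together with the binding profitable-IC constraint, extended to player $i$ via the hyperplane $H$) and your comparison argument showing $\gamma_j^*(y_g^i)\leq\gamma_j'(y_g^i)$, $\gamma_j^*(y_b^i)\geq\gamma_j'(y_b^i)$ for $j\neq i$ are exactly the paper's proof; they correctly deliver $\gamma_j^*(y_b^i)\geq\mu_j$ for $j\neq i$ and $\gamma_i^*(y_g^i)\geq\mu_i$. The problem is your third paragraph. What you call ``the main obstacle'' is where the proof actually has to close, and you do not close it: you only announce a plan (``I expect to show\dots''), and the plan is both unnecessary and unsound as stated. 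The perturbation/maximization argument over the set $F$ is circular in the way you deploy it---identifying the maximizer $\gamma^\dagger$ with $\gamma^*$ requires already knowing that the all-binding point lies in $V_\mu\times V_\mu$, which is the conclusion of the Lemma---and the claim that Condition~3 ``guarantees that the profitable-IC becomes the effective binding constraint before the $\mu$ floors do'' is precisely the assertion that needs proof; you give no argument for it.

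The missing step is in fact elementary and requires no convex analysis. Subtract the binding profitable-IC equality from the decomposition equality at $\gamma^*$: for each $j\neq i$,
\begin{equation*}
\bigl(\rho(y_g^i|\besti)-\kappa_{ij}^+\bigr)\bigl(\gamma_j^*(y_g^i)-\gamma_j^*(y_b^i)\bigr)
=\Bigl(\tfrac{1}{\delta}-1\Bigr)\bigl(v_j-\tilde{v}_j^i\bigr).
\end{equation*}
Since $\kappa_{ij}^+\leq 0<\rho(y_g^i|\besti)$ and $v_j\geq\mu_j>\tilde{v}_j^i$ (the strict inequality $\mu_j>\tilde v_j^i$ having been established earlier in the proof of Theorem~1), the right-hand side is positive, so $\gamma_j^*(y_g^i)\geq\gamma_j^*(y_b^i)$. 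Hence $\gamma_j^*(y_g^i)\geq\gamma_j^*(y_b^i)\geq\gamma_j'(y_b^i)\geq\mu_j$, which is the ``other half'' for the inactive players. For the active player the within-player ordering reverses (because $\gamma_i(y)=\frac{1}{\lambda_i}[1-\sum_{k\neq i}\lambda_k\gamma_k(y)]$), so $\gamma_i^*(y_b^i)\geq\gamma_i^*(y_g^i)\geq\gamma_i'(y_g^i)\geq\mu_i$. In short: the coordinate that moves in the ``wrong'' direction is, for each player, automatically bounded below by that same player's other coordinate, which you have already shown to be at least $\mu$. Replace your third paragraph with this two-line computation and the proof is complete and coincides with the paper's.
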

\begin{proof}
Given feasible continuation payoffs $\gamma^\prime(y_g^i)\in V_{\mu}$ and $\gamma^\prime(y_b^i)\in V_{\mu}$, we construct $\gamma^*(y_g^i)\in
V_{\mu}$ and $\gamma^*(y_b^i)\in V_{\mu}$ that are feasible and make the IC constraint on currently profitable deviations in
 (\ref{eqn:IC_inactive_simplified}) binding for all $j\not=i$.

Specifically, we set $\gamma_j^*(y_g^i)$ and $\gamma_j^*(y_b^i)$ such that the IC constraint on currently profitable deviations in
 (\ref{eqn:IC_inactive_simplified}) is binding. Such $\gamma_j^*(y_g^i)$ and $\gamma_j^*(y_b^i)$ have the following property:
$\gamma_j^*(y_g^i)\leq\gamma_j^\prime(y_g^i)$ and $\gamma_j^*(y_b^i)\geq\gamma_j^\prime(y_b^i)$ for all $\gamma_j^\prime(y_g^i)$ and
$\gamma_j^\prime(y_b^i)$ that satisfy  (\ref{eqn:IC_inactive_simplified}). We prove this property by contradiction. Suppose that there exist
$\gamma_j^\prime(y_g^i)$ and $\gamma_j^\prime(y_b^i)$ that satisfy  (\ref{eqn:IC_inactive_simplified}) and
$\gamma_j^\prime(y_g^i)=\gamma_j^*(y_g^i)-\Delta$ with $\Delta>0$. Based on the decomposition equality, we have
$$
\gamma_j^\prime(y_b^i) = \gamma_j^*(y_b^i) + \left( \frac{\rho(y_g^i|\besti)}{1-\rho(y_g^i|\besti)} \right) \Delta
$$
We can see that the IC constraint on currently profitable deviations is violated:
\begin{eqnarray}
& & \kappa_{ij}^+ \, \gamma_j^\prime(y_g^i) + (1-\kappa_{ij}^+) \, \gamma_j^\prime(y_b^i) \nonumber \\
&=& \kappa_{ij}^+ \, \gamma_j^*(y_g^i) + (1-\kappa_{ij}^+) \, \gamma_j^*(y_b^i) + \left[ -\kappa_{ij}^+ \, \Delta + (1-\kappa_{ij}^+)
\left(\frac{\rho(y_g^i|\besti)}{1-\rho(y_g^i|\besti)}\right) \Delta \right] \nonumber \\
&=& v_j + (1-\kappa_{ij}^+)\left[ \frac{\rho(y_g^i|\besti)}{1-\rho(y_g^i|\besti)}-\frac{\kappa_{ij}^+}{1-\kappa_{ij}^+}
 \right]  \Delta \nonumber \\
&>& v_j \nonumber
\end{eqnarray}
where the last inequality results from $\kappa_{ij}^+\leq0$. Hence, we have $\gamma_j^*(y_g^i)\leq\gamma_j^\prime(y_g^i)$ and
$\gamma_j^*(y_b^i)\geq\gamma_j^\prime(y_b^i)$ for all $\gamma_j^\prime(y_g^i)$ and $\gamma_j^\prime(y_b^i)$ that satisfy
 (\ref{eqn:IC_inactive_simplified}).

Next, we prove that if $\gamma^\prime(y)\in V_{\mu}$, then $\gamma^*(y)\in V_{\mu}$. To prove $\gamma^*(y)\in V_{\mu}$, we need to show that
$\gamma_j^*(y_g^i)\geq \mu_j$ and $\gamma_j^*(y_b^i)\geq \mu_j$ for all $j\in N$. For $j\not=i$, we have
$\gamma_j^*(y_g^i)\geq\gamma_j^*(y_b^i)\geq\gamma_j^\prime(y_b^i)\geq \mu_j$. For $i$, we have
$$
\gamma_i^*(y_g^i) = \frac{1}{\lambda_i}\left(1-\sum_{j\neq i} \lambda_j \gamma_j^*(y_g^i)\right) \geq \frac{1}{\lambda_i}\left(1-\sum_{j\neq i}
\lambda_j \gamma_j^\prime(y_g^i)\right) = \gamma_i^\prime(y_g^i) \geq \mu_i
$$

This proves the lemma.
\end{proof}

Using this Lemma, we can calculate the continuation payoffs of the inactive player $j\not=i$:
\begin{eqnarray}
\gamma_j(y_g^i) &=& \frac{\big(\frac{1}{\delta}(1-\kappa_{ij}^+) - [1-\rho(y_g^i|\besti)]\big) v_j-(\frac{1}{\delta}-1)(1-\kappa_{ij}^+)  \tilde{v}_j^i}{\rho(y_g^i|\besti)-\kappa_{ij}^+} \nonumber \\
                &=& \frac{v_j}{\delta} - \left(\frac{1-\delta}{\delta}\right) \tilde{v}_j^i + \left(\frac{1-\delta}{\delta}\right) [1-\rho(y_g^i|\besti)] \alpha(i,j), \nonumber \\
\gamma_j(y_b^i) &=& \frac{\left[\rho(y_g^i|\besti)-\frac{1}{\delta}\kappa_{ij}^+\right]  v_j + (\frac{1}{\delta}-1)\kappa_{ij}^+ \,
\tilde{v}_j^i}{\rho(y_g^i|\besti)-\kappa_{ij}^+} \nonumber \\
                &=& \frac{v_j}{\delta} - \left(\frac{1-\delta}{\delta}\right) \tilde{v}_j^i - \left(\frac{1-\delta}{\delta}\right) \rho(y_g^i|\besti)  \alpha(i,j). \nonumber
\end{eqnarray}

The active player's continuation payoffs can be determined based on the inactive players' continuation payoffs since $\gamma(y)\in V$. We calculate
the active player $i$'s continuation payoffs as
\begin{eqnarray}\label{eqn:continuation_payoff_active}
\gamma_i(y_g^i) &=& \frac{v_i}{\delta} - \left(\frac{1-\delta}{\delta}\right) \tilde{v}_i^i - \left(\frac{1-\delta}{\delta}\right)[1-\rho(y_g^i|\besti)] \frac{1}{\lambda_i} \sum_{j\neq i} \lambda_j \alpha(i,j), \nonumber \\
\gamma_i(y_b^i) &=& \frac{v_i}{\delta} - \left(\frac{1-\delta}{\delta}\right) \tilde{v}_i^i + \left(\frac{1-\delta}{\delta}\right) \rho(y_g^i|\besti)
\frac{1}{\lambda_i} \sum_{j\neq i} \lambda_j \alpha(i,j) \nonumber
\end{eqnarray}
Hence, the constraint $v\in\mathscr{B}(V_{\mu};\delta,\besti)$ on discount factor $\delta$ is equivalent to
$$
\gamma(y)\in V_{\mu} \mbox{ for all } y\in Y \Leftrightarrow \gamma_i(y) \geq \mu_i  \mbox{ for all }   i\in N, y\in Y
$$
Since $\kappa_{ij}^+(\mu_j)\leq0$, we have $\gamma_j(y)\geq v_j$ for all $y\in Y$, which means that $\gamma_j(y)\geq \mu_j$ for all $y\in Y$. Hence, we
only need the discount factor to have the property that $\gamma_i(y)\geq \mu_i$ for all $y\in Y$. Since $\gamma_i(y_g^i)<\gamma_i(y_b^i)$, we need
$\gamma_i(y_g^i)\geq \mu_i$, which leads to
$$
\delta \geq \frac{1}{1+\lambda_i(v_i-\mu_i)/\left[\lambda_i(\tilde{v}_i^i-v_i)+\sum_{j\neq i}
\lambda_j\cdot(1-\rho(y_g^i|\besti))\alpha(i,j)\right]}.
$$

Hence, the optimization problem \eqref{eqn:delta_optimization} is equivalent to
\begin{eqnarray}\label{eqn:delta_matrix}
\underline{\delta}(\mu) = \max_{v\in V_{\mu}} \min_{i\in\mathcal{N}} x_i(v)
\end{eqnarray}
where
$$
x_{i}(v)\triangleq \frac{1}{1+\lambda_i(v_i-\mu_i)/\left(\lambda_i(\tilde{v}_i^i-v_i)+\sum_{j\neq i} \lambda_j [1-\rho(y_g^i|\besti)]
\alpha(i,j)\right)}
$$
Since $x_i(v)$ is decreasing in $v_i$, the payoff $v^*$ that maximizes $\min_{i\in\mathcal{N}} x_i(v)$ must satisfy $x_i(v^*)=x_j(v^*)$ for all $i$
and $j$. Now we find the payoff $v^*$ such that $x_i(v^*)=x_j(v^*)$ for all $i$ and $j$.

Define $$z\triangleq \frac{\lambda_i(v_i^*-\mu_i)}{\lambda_i(\tilde{v}_i^i-v_i^*)+\sum\limits_{j\neq i} \lambda_j[1-\rho(y_g^i|\besti)] \alpha(i,j)}$$
Then we have
$$
 \lambda_i(1+z)v_i^* = \lambda_i(\mu_i+z\tilde{v}_i^i) - z \sum_{j\neq i} \lambda_j   [1-\rho(y_g^i|\besti)] \alpha(i,j)
$$
from which it follows that
$$
z = \frac{1 - \sum\limits_{i} \lambda_i\mu_i}{\sum\limits_{i} \left(\lambda_i\tilde{v}_i^i + \sum\limits_{j\neq i} \lambda_j
[1-\rho(y_g^i|\besti)] \alpha(i,j)\right) - 1}
$$
Hence, the minimum discount factor is $\underline{\delta}(\mu) = \frac{1}{1+z}$; substituting the definition of $z$ yields Condition~4.  This completes the proof that these Conditions 1-4 are necessary for $V_\mu$ to be a self-generating set.

It remains to show that these necessary Conditions are also sufficient, which is accomplished in the proof of Theorem 2.  This completes the proof of Theorem 1. $\Box$

\bigskip

\noindent {\bf Proof of Theorem 2 }  In view of the results of APS, it suffices to show  that the algorithm yields a decomposition of each
 target vector $v(t) \in V_\mu$.  The algebra in the proofs of Propositions 2 and 3 shows that Conditions 1 and 2 guarantee that the incentive compatibility constraints are satisfied for the inactive and active players.  The algebra in the proof of Theorem 1 shows that Conditions 3 and 4 taken together guarantee that the continuation payoff $\gamma(y)$ belongs to $V_{\mu}$ for each $y \in Y$.     $\Box$

 \bigskip

\noindent {\bf Proof of Theorem 3 }  Given a parameter $\xi$, the algorithm uses the target vector as the continuation value to compute $N$ indicators; let $\Xi(0)$ be the set of parameters for which no two of these indicators are equal.  For each parameter in $\Xi(0)$, the algorithm computes continuation values following the good signal and the bad signal and then uses each of these continuation values to compute $N$ indicators; let $\Xi(1) \subset \Xi(0)$ be the set of parameters for which no two of these indicators are equal.  Proceeding by induction, we define a decreasing sequence of sets $\Xi(0) \supset \Xi(1) \supset \cdots \supset \Xi(T)$; let $\Xi_T$ be the complement of
$\Xi(T)$.  Notice that the indicators are continuous functions of the parameters so the ordering of the indicators is locally constant provided no two indicators are equal.  Hence for each $\xi \in \Xi(T) = \Xi \setminus \Xi(T)$ then there is a small open neighborhood $Z$ of $\xi$ so that if $\xi' \in Z$ then the strategies $\sigma_{\xi'}, \sigma_\xi$ generate the same ordering of indicators in each of the first $T$ periods. In particular, $\sigma_{\xi'}(h) =  \sigma_\xi(h)$ for each history $h \in \hist^T$; that is, $\xi \to \sigma^T_\xi$ is locally constant on the complement of $\Xi_T$.  It remains only to show that $\Xi_T$ is closed and has measure 0. In fact, $\Xi_T$ is a finite union of lower-dimensional submanifolds; this is a consequence of general facts about semi-algebraic sets and the observation that all the indicators are continuous semi-algebraic functions of the parameters, no two of which coincide on any open set. See \citet*{BCR}, \citet*{BlumeZame}.

\bigskip

\noindent {\bf Proof of Theorem 4 }  Propositions 2, 3 show that Conditions 1, 2 are necessary conditions for the existence of an efficient PPE for {\em any} discount factor..  Suppose therefore that Conditions 1,2 are satisfied.  It is easily checked that the definitions of $\overline{\mu}_1, \overline{\mu}_2$ guarantee that Condition 3 of Theorem 1 are satisfied.  Finally, if $\delta \geq \delta^*$ then Condition 4 of Theorem 1 is also satisfied.  It follows from Theorem 1 that for each $\delta \geq \delta^*$,
$V_{\overline{\mu}}$ is a self-generating set, so every target vector in $V_{\overline{\mu}}$ can be achieved in a PPE.  Hence $E(\delta) \supset V_{\overline{\mu}}$ for every $\delta \in [\delta^*, 1)$.
To see that $V_{\overline{\mu}} = E(\delta)$ for every $\delta \in [\delta^*, 1)$, simply note that for each $\delta$ the set $E(\delta)$ is closed and convex, hence an interval, hence of the form $V_\mu$ for some $\mu$.  However, Condition 3 of Theorem 1 guarantees that $\mu \geq \overline{\mu}$ which completes the proof.  $\Box$

 \end{document}